\newlength{\lyxlabelwidth}      % auxiliary length 
\theoremstyle{remark}
\newtheorem{claim}{Claim}
\theoremstyle{plain}
\newtheorem{proposition}{Proposition}
\newtheorem{lemma}{Lemma}
\newtheorem{theorem}{Theorem}
\theoremstyle{definition}
\newtheorem{example}{Example}
\newtheorem{remark}{Remark}
\newtheoremstyle{named} % name of the style
{} % Space above
{} % Space below
{} % Body font
{} % Indent amount
{\bfseries} % Theorem head font
{:} % Punctuation after theorem head
{.5em} % Space after theorem head
{#3} % Theorem head spec (customized to be only the name)
\theoremstyle{named}
\let\variant\relax
\declaretheorem[
  name={Property~\variant{\textsuperscript{*}}}, %$^\ast$
  style=definition,
  numbered=no,
]{property*}
\declaretheorem[
  name={Property~\variant{\textsuperscript{\#}}},
  style=definition,
  numbered=no,
]{property-q}
\declaretheorem[
  name={Property~\variant{\textsuperscript{\textdagger}}},
  style=definition,
  numbered=no,
]{property-v}
\newlist{casenv}{enumerate}{4}
\setlist[casenv]{leftmargin=*,align=left,widest={iiii}}
\setlist[casenv,1]{label={{\itshape\ Case} \arabic*.},ref=\arabic*}
\setlist[casenv,2]{label={{\itshape\ Case} \roman*.},ref=\roman*}
\setlist[casenv,3]{label={{\itshape\ Case\ \alph*.}},ref=\alph*}
\setlist[casenv,4]{label={{\itshape\ Case} \arabic*.},ref=\arabic*}
\title{On fairness of multi-center allocation problems\thanks{
We would like to thank {\"O}zg{\"u}n Ekici, Flip Klijn, and the audience at several seminars and conferences for their very helpful comments and discussions. All remaining errors are our own. Yao Cheng is gratefully acknowledge financial support from the National Natural Science Foundation of China (No.72203176), and Feng gratefully acknowledges financial support from the National Natural Science Foundation of China No. 72503028. Refine.ink was used to check the paper for consistency and clarity.}}
\author{
Yao Cheng\thanks{Southwestern University of Finance and Economics, China. Contact:~\protect\href{mailto:chengyao@swufe.edu.cn}{chengyao@swufe.edu.cn}.}
\and
Di Feng\thanks{Dongbei University of Finance and Economics, China. Contact:~\protect\href{mailto:dfeng@dufe.edu.cn}{dfeng@dufe.edu.cn}.}}
\date{\today}
\begin{document}
\maketitle
%\pubMonth{Month}
%\pubYear{Year}
%\pubVolume{Vol}
%\pubIssue{Issue}
%\JEL{}
%\Keywords{}

\begin{abstract}
We investigate \citet{ekici2024}'s multi-center allocation problems, focusing on fairness in this context. We introduce three fairness notions that respect centers' priorities: \emph{internal fairness}, \emph{external fairness}, and \emph{procedural fairness}. The first notion eliminates envy among agents within the same center, the second prohibits envy across different centers, and the third rules out envy from an ex-ante perspective through agents' trading opportunities.

%Together with two well-known properties, \emph{strategy-proofness} and \emph{pair efficiency}, 
We provide two characterizations of a natural extension of the top-trading-cycles mechanism (TTC) through our fairness notions. 
Precisely, we show that in the presence of \emph{strategy-proofness} and \emph{pair efficiency}, \emph{internal fairness} and \emph{external fairness} together characterize TTC (Theorem~\ref{thm}). Additionally, \emph{strategy-proofness} combined solely with \emph{procedural fairness} also characterizes TTC (Theorem~\ref{thm:pf}).
Furthermore, by adding \emph{internal fairness}, we establish our third TTC characterization, by relaxing \citet{ekici2024}'s \emph{queuewise rationality} to another voluntary participation condition, the \emph{center lower bound} (Theorem~\ref{thm:ir}).
Finally, we define a core solution within this model and characterize it through TTC (Theorem~\ref{thm:PC}).

Our findings offer practical insights for market designers, particularly in contexts such as international cooperation in medical programs and worker exchange programs.
\medskip

%\JEL{C78; D61; D47}\medskip    

%\JEL{C78; D61; D47}
%\Keywords{multi-center allocation problems; fairness; pair efficiency; strategy-proofness; top trading cycles.}

\noindent \textit{JEL classification:} C78; D61; D47.\medskip
    
\noindent \textit{Keywords:} multi-center allocation problems; fairness; pair efficiency; strategy-proofness; top trading cycles.

\end{abstract}

\maketitle

\section{Introduction}
We consider allocation problems with incomplete priorities: Agents and objects are partitioned regionally, and priorities are internally induced only within each region. 
The considered model is introduced by \citet{ekici2024}, and it is a generalization of \citet{shapley1974}'s housing markets model and \citet{hylland1979}'s house allocation model. This model
covers important practical applications, e.g., organ transplant network and worker exchange problem.

In organ transplantation networks (e.g., for living donor liver or kidney transplants), collaboration across regions is becoming increasingly important \citep[see, for instance][]{ENCKEP,UNOS1,UNOS2}: While kidney and liver transplantation programs typically operate regionally, cross-program cooperation is necessary for better outcomes, such as helping more patients receive organs and improving match quality.
%\footnote{In contrast, without cooperation, i.e., when kidney exchange markets are fragmented, the outcome may be inefficient: see \citet{agarwal2019market} for details.}
This kind of collaboration is becoming more common \citep{rees2024p}.
To address this issue, \citet{ekici2024} introduces a new model called \emph{multi-center allocation problems}, focusing on organ transplantation networks.\footnote{
A partial list of papers that study on the context of multi-center kidney exchange and dynamic matching is \citet{roth2007notes,abraham2007clearing,rees2017kidney,agarwal2019market,akbarpour2024unpaired}.} 
In this model, agents (patients) and objects (organs) are partitioned into different medical centers, and each center follows a strict order to rank its members, originally designed for running its own transplantation program locally. 
By accounting for the complex structure of this model, we can incentivize organ transplantation programs to operate more globally, improving efficiency while ensuring voluntary participation and fairness. 

In worker exchange programs, a worker may be replaced by a new one at his home organization. A few examples include the Commonwealth Teacher Exchange Programme \citep{dur2019two}, teacher reassignments in France \citep{combe2022design}, and job rotation in Japan \citep{yu2020market}. In such exchanges, the key question is how to reassign workers in an efficient and fair manner.
The model we consider can also be applied to this type of allocation problems, where employees' assigned positions are switched.
In this context, agents (employees) and objects (job positions) are distributed across different departments within an organization (a firm or a  public sector).
Each department has information only about its own employees, resulting in an internal priority that may reflect the department's evaluation of its employees.

In the context of organ transplantation networks,
\citet{ekici2024} introduces a new voluntary participation condition, \emph{queuewise rationality}, which states that an agent ordered in the $q$-th position in a center is guaranteed to receive an object at least as good as his $q$-th-favorite object in that center.
Furthermore, he demonstrates that a natural extension of the top-trading-cycles mechanism (TTC in short) is characterized by \emph{pair efficiency}, \emph{strategy-proofness}, and \emph{queuewise rationality}. Consequently, one can conclude that TTC is outstanding in this context, as it is efficient, strategically robust, and ensures voluntary participation.
 
\subsection{Overview of the paper}
In this paper, unlike \citet{ekici2024}, we primarily focus on the issue of fairness: Fairness is particularly important for implementation, as participants may reject an unfair mechanism and refuse to use it. 
One relevant example of this happened in school choice in New Orleans, where TTC was used in 2010 but was later replaced by the deferred acceptance mechanism, because TTC was considered unfair in the context of school choice \citep[for details, see][]{abdulkadiroǧlu2020efficiency}.
That is to say, before implementing a mechanism, we need to verify whether it satisfies certain fairness criteria.
To address it, we propose several fairness notions, derived from centers' priorities, aimed at eliminating agents' envies.
\medskip

We begin by proposing two notions that aim to eliminate agents' envies over allocations from an ex-post perspective.
The first one,
\emph{internal fairness}, states that an agent should never envy another agent who is from the same center but has a lower priority than himself.
%from the same center who received an object from that center, but has a lower priority than himself.
Next, to determine whether envy between agents from different centers is justifiable, we introduce our second fairness property, \emph{external fairness}, inspired by \citet{morrill2015}.
Following \citet{morrill2015}, if agent $i$ envies agent $j$ from a different center, we check whether reassigning $j$'s allotment to $i$ could potentially harm any agents from $i$'s center who are ranked higher than $i$. If so, we consider $i$'s envy unjustifiable.
%and reject his complaint.

The third notion, \emph{procedural fairness}, addresses eliminating agents' envies over trading opportunities from an ex-ante perspective.
% \CY{ 
% Procedural fairness eliminates envy from the perspective of agents' trading opportunities, rather than from the perspective of the ex post outcomes.}
% \DF{Is this sentence analogous with the previous one? ``\emph{procedural fairness}, addresses eliminating agents' envies over trading opportunities from an ex-ante perspective.''}
We introduce a notable notion, agents' trading opportunity sets: Each agent's trading opportunity set consists of objects belonging to the agent's center, respecting the center's priority for that agent. 
Agents can use objects from their trading opportunity set for direct consumption or for the exchange with other agents.
\emph{procedural fairness} says that 
no set of agents can reach preferred objects by trading only within their trading opportunity sets.
%each agent obtains the best object possible from his trading opportunities, meaning that no agent envies the trading opportunities of others.
\medskip

Our main findings provide several TTC characterizations through our fairness notions. Specifically, we show that: (1) along with \emph{strategy-proofness} and \emph{pair efficiency}, \emph{internal fairness} and \emph{external fairness} characterize TTC (Theorem~\ref{thm}); and (2) together with \emph{strategy-proofness}, \emph{procedural fairness} also characterize TTC (Theorem~\ref{thm:pf}).
Our results further highlight the exceptional performance of TTC from a fairness perspective.
Moreover, we consider an additional voluntary participation condition, the \emph{center lower bound}, adapted from \citet{chen2025welfare},
%introduced by \citet{changwoo2024}, 
%and \citet{feng2024coal}
which is weaker than \citet{ekici2024}'s \emph{queuewise rationality}. We also provide another TTC characterization by replacing \emph{external fairness} with the \emph{center lower bound} (Theorem~\ref{thm:ir}). 
Finally, we also explore ``(strict) core'' solution in our context.
We modify the standard blocking notion to fit our model and correspondingly have a definition of our core solution, referred to as \emph{proper core}, and show that it is single-valued and coincides with TTC (Theorem~\ref{thm:PC}).
%To define the core in our model, a challenge we face is how to form blocking coalitions that respect priorities. To address this, we propose a new condition, which states that if an agent wants to form a blocking coalition, he must ensure that no other agent in the same center with higher priority than him is hurt.  Then, we correspondingly have a definition of our core solution, referred to as \emph{ultimate core}, and show that it is single-valued and coincides with TTC (Theorem~\ref{thm:PC}).

Overall, together with \citet{ekici2024}'s result, our results suggest that TTC perform surprisingly well across four common objectives: Fairness, efficiency, incentives, and voluntary participation.

\subsection*{Organization}
The rest of the paper is organized as follows:
In the next section, we provide an overview of the related literature.
In Section~\ref{sec:model}, we introduce the basic model, the mechanisms with their properties, and a formal description of TTC.
In Section~\ref{sec:result}, we first characterize TTC through our first two fairness properties (Theorem~\ref{thm}). 
We then provide another TTC characterization (Theorem~\ref{thm:pf}) based on our third fairness property.
In Section~\ref{sec:qr}, we distinguish our results from those of \citet{ekici2024} and offer an alternative TTC characterization (Theorem~\ref{thm:ir}) by weakening \citet{ekici2024}'s \emph{queuewise rationality}.
In Section~\ref{sec:stable}, we discuss stability notions within our model, define a core solution in this context, and characterize it via TTC (Theorem~\ref{thm:PC}).
Finally, we conclude the paper in Section~\ref{sec:con} with some final remarks.
Appendix~\ref{appendix:proof} contains all proofs omitted
from the main text, and we demonstrate the independence of our properties in Appendix~\ref{appendix:examples}.

\section{Related literature}
The paper closest to ours is \citet{ekici2024} and we further discuss the relation between his and ours in Section~\ref{sec:qr}.
The main result of this paper contributes to the literature that studies TTC from an axiomatic perspective, e.g., \citet{ma1994}. \citet{morrill2024top} provide a comprehensive survey on this topic.

\subsection*{Fairness}
Our paper is related to a growing literature that investigates 
fairness notions that are compatible with efficiency.
Our first notion, \emph{internal fairness}, is closely related to weak fairness \citep{svensson1994} and respect for priority \citep{coreno2022}, both of which ensure the absence of justified envy. The main difference lies in implications: while their fairness properties are compatible with efficiency, they only result in dictatorial rules, which are often viewed as inequitable. In contrast, our property allows for non-dictatorial mechanisms, highlighting a key distinction.

Our second notion, \emph{external fairness}, aligns with the idea that an agent with envy must propose an alternative allocation. Work in this vein includes concepts such as 
%reasonable fairness \citep{kesten2004student}, 
justness \citep{morrill2015}, partial fairness \citep{dur2019school}, essential stability \citep{troyan2020essentially}, and legality \citep{ehlers2020legal}.
Most papers in this literature are loosely based on the idea that priorities can be violated in certain situations, such as when agents ``consent'' to having their priority violated if they cannot benefit from it, while other agents can be better off through the violation of priorities. This reveals a foundational difference between those papers and ours: In their setting, priorities are complete, whereas in ours, priorities are incomplete (recall that each center's priority is only over its own members). As a result, the key challenge in our model lies in defining fairness with incomplete priorities, rather than focusing on priority violations.
Nevertheless, \emph{external fairness} shares similar ideas with these studies, as it also emphasizes the consequences of agents' envies—specifically, how allowing one agent's envy can trigger a chain of reassignments (see Example~\ref{example1}).
While similar in the sense that both consider more than one step ahead, \emph{external fairness} fundamentally differs from notions like essential stability. Essential stability takes an optimistic
approach: When anticipating future steps, they focus on the best possible outcomes, assuming agents achieve their best possible allotments. In contrast, \emph{external fairness} adopts a pessimistic approach, focusing on excluding all potential harms to other agents. More broadly, \emph{external fairness} contributes to the literature on defining stability in scenarios where agents anticipate more than one step of blocking.
The standard concept in this literature is called farsightedness \citep{harsanyi1974equilibrium}: An outcome is farsightedly stable if no series of blocks exists that culminates in better outcomes for every agent participating in it.
This concept is recently explored into priority-based allocation problems, e.g., \citet{dogan2023existence} and \citet{atay2024school}.

Our concept of \emph{procedural fairness} offers new insights by emphasizing potential unfairness in the allocation process rather than in the allocations themselves.  
Relevant literature on procedural fairness includes \citet{KK2006} and \citet{CY2021}. 
Different from ours, \citet{KK2006} focus on random allocations.
They study probabilistic stable mechanisms that meet their requirement: whenever each agent has
the same probability to move at a certain point in the procedure that determines the final probability distribution. 
\citet{CY2021} study priority-based allocation problems and introduce a control system where the control rights of each object are concentrated based on priorities and no coalition has redundant control rights, and propose a non-empty core concept based on the control system. 
This paper differs from their work in the following aspects: 
First, in our setting, priorities are incomplete, whereas in theirs, objects are endowed with complete priorities. Second, in their paper, the control right of each agent relies on that agent's received allotment, while the trading opportunity set of each agent does not.  
Third, unlike their system, which allows for both direct and indirect control, our opportunity set focuses solely on direct control (i.e., consuming your own or doing exchange with other agents) due to the practical challenges of implementing indirect control in kidney exchanges, especially among agents from different centers. 
% Finally, we focus on simple exchanges involving one or two agents from different centers, whereas their study considers larger coalitions. 
% \DF{Does the control-exclusion core imply procedural fairness? }
Another related notion is individual trade stability \citep{papai2013matching}. This notion somewhat is between 
our \emph{external fairness} and \emph{procedural fairness}.
It states that agent $i$'s envy to $j$ is not justified if 
$j$ obtains the object $i$ desires through a trade with a third agent, $k$, who has a higher priority than $i$ for that object.
In other words, individual trade stability eliminates envies
by accounting for agents' tradings rather than depending on objects' priorities.
% \CY{It seems that our opportunity set does not solely depend on objects' priorities? Agent $i$'s opportunity set also depends on the set of objects used by those agents with higher priority than $i$.}
% \DF{Yes, in some sense it is history-dependent}

\subsection*{Incomplete priorities}
More broadly, our paper contributes to the topic of allocation problems with incomplete priorities, e.g., \citet{dur2019two}.
In this type of problems, objects only have rank over partial agents and thus their priorities may be incomplete.
Given this restriction, how to define fairness is crucial. 
In \citet{dur2019two}, their fairness notion, respect for internal priorities is related to objects' eligibility quotas as in their model, objects' capacities may vary. 
However, in our setting, we cannot change the number of objects due to the constraints in reality, e.g., the supply of organs are scarce and may be fixed through a period.
Instead, we focus on fairness within a framework where the size of the problem is fixed.

Our paper also contributes to the growing literature focusing on allocation problems with complex ownership or property rights \citep{ekici2013,BK2019, sun2020, zhang2020, yang2023proper}. In these types of problems, ownership (or agents' rights) can often be translated into partial priorities: owners have strictly higher priority than other agents, while non-owners are treated as indifferent among themselves.
This creates a challenge similar to the one we address: It is hard to compare two agents when there is no strict order to rank them.
In other words, all face the same question: How to expand priorities into linear orders in a fair manner. We further discuss this point in Section~\ref{sec:stable}.

\section{Model}
\label{sec:model}
\subsection{Multi-center allocation problems}
The basic model setup is same to \citet{ekici2024}. There is a finite \emph{set of centers}, $C$. Each center $c\in C$ consists of a finite \emph{set of agents (e.g., patients or workers)} $N_c$ and a finite \emph{set of heterogeneous objects (e.g., organs or job positions)} $O_c$, with $|N_c|=|O_c|$.\footnote{In this paper, we do not consider the outside option because we want to keep balance in our model, we further discuss this point in Section~\ref{sec:con}.}
Each center $c$ has a \emph{strict priority} $\succ_c$ over $N_c$. Each center's priority of its members reflects some form of justice—whether based on need, queue time, or other criteria. However, the design of such priorities is beyond the scope of this paper, and we therefore assume that the priority structure is exogenously given.

\medskip

Let $N\equiv \cup_{c\in C} N_c $ and $O\equiv \cup_{c\in C} O_c $ denote the set of all agents and all objects, respectively. Let $n\equiv |N|=|O|\geq 2$.
For each agent $i\in N$, let $c(i)$ represent the center to which they belong to, i.e., $i\in N_{c(i)}$.
An \emph{allocation} is a bijection $x:N \to O$. Let $X$ denote the set of all allocations.
For each $x\in X$ and each $i\in N$, $x_i\equiv x(i)$ denotes agent~$i$'s \emph{allotment} at $x$.
Let $R = (R_i)_{i \in N}$ be a preference profile over $O$, where $R_i$ denotes the strict preference of agent~$i$. The strict part associated with $R_i$ is denoted by $P_i$.\footnote{That is, for all $a,b \in O$, $a \mathbin{R_i} b$ if and only if $a \mathbin{P_i} b$ or $a = b$.} 
Let $\mathcal{R}$ be the set of all strict preferences. We use the standard notation $(R'_i, R_{-i})$ to denote the profile obtained from $R$ by replacing agent~$i$'s preference relation $R_i$ with $R'_i \in \mathcal{R}$.
\medskip

A \emph{multi-center allocation problem} (or a \emph{problem} in short) is formed by $( (N_c,O_c,\succ_c)_{c\in C},R   )$. As the $(N_c,O_c,\succ_c)_{c\in C}$ remain fixed throughout, we will simply denote the problem by $R$.
Thus, the strict preference profile domain $\mathcal{R}^N$ also denotes the set of all problems.
\medskip

Note that our model extends various classical object allocation models:\footnote{Note that if $n=2$, then our model reduces to either the Shapley-Scarf housing market model or the house allocation model.}
\begin{itemize}
    \item    If for each $c\in C$, $|N_c|=|O_c|=1$, then our model equals the \textbf{Shapley-Scarf housing market model} \citep{shapley1974}; and
    \item if $|C|=1$, then our model is the \textbf{modified house allocation model}. Note that in \citet{hylland1979}'s original model, there is no priority over agents.
\end{itemize}

\subsection{Mechanisms and their properties}
 A \emph{mechanism} is a function $f :   \mathcal{R}^N \to X$ that associates with each problem $R$ an allocation $f(R)$. For each $i \in N$, $f_i(R)$ denotes agent~$i$'s allotment at $f(R)$. 
 \medskip

Here, we introduce several properties of mechanisms that are related to fairness.
\medskip

The first one is \emph{strategy-proofness}, which is most frequently used in the literature on mechanism design.
\bigskip

\noindent \textbf{Strategy-proofness}. For each $R\in \mathcal{R}^N$, each $i\in N$, and each $R'_i\in \mathcal{R}$, $f_i(R) \mathbin{R_i} f_i(R'_i,R_{-i})$. 
\bigskip

Typically, \emph{strategy-proofness} is regarded as an incentive property that eliminates an agent's incentive to misreport his preferences individually. As a result, it is also viewed as a fairness notion, as it ``~`leveling the playing field' idea only indicates that sophisticated students lose their strategic rents under the new mechanism'' \citep{pathak2011}. 
%This suggests that \emph{strategy-proofness} endorses a view related to Richard Arneson’s equal opportunity for welfare account of distributive justice \citep{hitzig2020}. 
\medskip

Given $R$, we say agent $i$ \emph{envies} another agent $j$ at $x$ if $x_j \mathbin{P_i} x_i$. An ideal allocation should eliminate all agents' envies, resulting in what is known as an \emph{envy-free} allocation. However, since our model only allows deterministic allocations, such ideal situations does not exist in general.\footnote{For instance, if $R_i=R_j$, then for any allocation $x$, it must be the case that either $i$ envies $j$, or $j$ envies $i$. }
Therefore, we consider some weaker properties.
The first one is \emph{pair efficiency} \citep{ekici2022}, which states that no pair of agents should envy each other.
\bigskip

\noindent \textbf{Pair efficiency}. For each $R\in \mathcal{R}^N$, there is no pair of agents $i,j\in N$ such that $f_i(R)\mathbin{P_j} f_j(R)$ and $f_j(R)\mathbin{P_i} f_i(R)$. 
\bigskip

Originally, \emph{pair efficiency} is viewed as an efficiency notion, which is weaker than \emph{Pareto efficiency}.\footnote{An allocation $x$ is \textit{Pareto efficient} if there is no distinct allocation $y\neq x$ such that for each $i\in N$, $y_i\mathbin{R_i} x_i$.}
Here, one may reinterpret it as a fairness notion as it rules out the situation where two agents envy each other, aligning with the idea that swapping two objects eliminates envy \citep{echenique2025swap}.
\medskip

The next property reflects ``a minimal level of mindfulness toward the property rights of individual agents'' \citep{ekici2024}.

For each $c\in C$, and $i\in N_c$, let $r_i=1+|\{j\in N_{c(i)} \mid j\succ_c i\}|$ be his rank according to $\succ_c$, i.e., if $i$ has the $k-$highest priority at $\succ_c$, then $r_i=k$.
Moreover, for each $R\in\mathcal{R}^N$, let $o_i(R)\in O_c$ be his $r_i$-preferred object in $O_c$, according to $R_i$.
Formally, $o_i(R)$ is such that $\{o\in O_c\mid o\mathbin{R_i}o_i(R)\}=r_i$. 
\bigskip

\noindent \textbf{Queuewise rationality}. For each $R\in\mathcal{R}^N$, each $c\in C$, and each $i\in N_C$, $f_i(R)\mathbin{R_i} o_i(R)$.
\bigskip

\subsubsection*{A tale of three fairness properties}

In our model, since we also incorporate (incomplete) priorities over agents, along with the idea of pairwise stability in matching theory, it is natural to consider envy-freeness with respect to these priorities.
However, as we mentioned earlier, the priorities in our model are incomplete: From each center, we have one priority only over its own members. As a result, determining whether an agent’s envy is justified (based on these priorities) becomes challenging due to this incompleteness.\footnote{In Section~\ref{sec:pairwisestable}, we further discuss the classical pairwise stability in our model.}

First, we start with an obvious case: When two agents are from the same center.
In this case, 
the assignment may violate the envious agent's priority since he has a higher priority than the agent who received the object that he desired.
\medskip

%Next, we introduce a new property, \emph{internal fairness}, which states that no agent will envy another agent from the same center who has a lower priority.

Given $f(R)$, agent $i$ \textit{internally justifiedly envies} another agent $j$ at $f(R)$ if 
\begin{itemize}
    \item[(1)] $i$ envies $j$ at $f(R)$;
    \item[(2)] $i$ and $j$ belong to the same center, i.e., $c(i)=c(j)=c$; and 
    \item[(3)] $i\succ_c j$.
\end{itemize}

\noindent \textbf{Internal fairness}. For each $R\in \mathcal{R}^N$, there is no pair of agents $i,j\in N$ such that $i$ \emph{internally justifiedly envies} $j$ at $f(R)$.
\bigskip

\emph{Internal fairness} only eliminates envy that is not ``justified'' 
by a center's priority of its members.
This type of justified fairness, driven by priorities, arises naturally in many applications. For instance, 
priorities guide the allocation of medical resources to individuals based on urgency, determine eligibility for public housing based on queueing, and enable schools to rank students based on exam scores and proximity.
%\citep{pathak2023fair,leshno2022dynamic,abdulkadirouglu2003}.
Moreover, \emph{internal fairness} captures incentive issues.\footnote{In the context of school choice with walk-zone rights, \citet{klaus2024minimal} study the similar incentive issues. They refer to it as \textit{minimal-access monotonicity}: No student has an incentive to give up his additional walk-zone rights.}
Consider the case when priorities are induced by a reserve system.
If an agent with higher priority receives a less preferred object than an agent with lower priority, then the higher-priority agent may benefit by manipulating his priority, e.g., by reneging~/~jockeying on it, to obtain a better object. Such strategic actions commonly exist in various reserve systems and have been extensively studied in the fields of queuing theory and operations research (see, e.g., \citet{kullback1997information} and \citet{gross2011fundamentals}).
\emph{Internal fairness} prohibits such strategic actions, as all agents prefer higher priorities.

\begin{remark}{\textbf{Weak internal fairness}}\ \\
One might argue that the case above is not fully justified, since $i$ may not have a claim to receive $j$'s allotment if it is outside their common center $c$.
Thus, we also consider a weak version of \emph{internal fairness}, which additionally requires that $j$'s allotment is also belongs to center $c$.

\noindent \textbf{Weak internal fairness}. For each $R\in \mathcal{R}^N$, there is no pair of agents $i,j\in N$ such that
(1) $f_j(R)\in O_{c(i)}$, and (2) $i$ \emph{internally justifiedly envies} $j$ at $f(R)$. 

Alternatively, one can restate it with the following: For each $c\in C$, and $o\in O_c$, let $\succ_o=\succ_c$. We say agent $i$ has higher priority than  $j$ at an object $o$ if $i\succ_o j$.
Then, \emph{weak internal fairness} says that there is no pair of agents $i,j\in N$ such that $i$ envies $j$ at $f(R)$, and $i$ has higher priority than $j$ at $f_j(R)$. \hfill $\diamond$ 
\label{remark:wif}
\end{remark}

By \emph{internal fairness}, we rule out envy among agents within the same center based on that center's priority. However, envy may still arise between agents from different centers, where no priority can be directly applied to determine whether such envy is ``justified.'' Nonetheless, as pointed out by \citet{yang2023proper}, there may be an indirect relationship derived from priorities that can be applied to eliminate such envy. To address this, we introduce our second fairness property, \emph{external fairness}, which is a modification of Morrill's \emph{justness} \citep{morrill2015}.
The intuitive interpretation of \emph{external fairness} is as follows.
%\DF{Following Moriiill I used the term complaint, but I do not like it because it means we introduce one more new thing, let us think if we could change it with our old terms}
If $i$ envies another agent $j$ from a different center and raises a complaint about the selected allocation, we must determine whether the complaint is justifiable.\footnote{
Note that here we use the term ``complaint,'' which differs from the term ``violation of priority'' we used earlier to define \emph{internal fairness}. The fundamental distinction between the two arises from the difficulty we face: in the case of a priority violation, we can directly use a center's priority to justify an agent's envy. In contrast, for a complaint, no such priority can be directly applied. Thus, violations of priorities do not exist, making it harder to determine whether the envy is justified.}
Since $i$ and $j$ belong to different centers, no exogenously given priority can be directly applied to judge $i$'s complaint. However, as Morrill pointed out, we can consider a hypothetical scenario: If we accept this complaint and reassign $j$'s allotment, what would be the consequence? Specifically, would another agent $k$, who belongs to the same center as $i$ and has a higher priority, be directly harmed by this reassignment?\footnote{Here, by ``directly harmed'' we mean that $k$'s allotment is directly reassigned to $j$, which aligns with the idea of \citet{morrill2015}. In other words, we formalize harm by considering only immediate effects. More complex chains, in which $j$’s deviation indirectly worsens $k$’s assignment without $j$ ever receiving $k$'s allotment, are not considered, as the current notion is sufficient to formalize our fairness concept and guarantee positive results.} If so, agent $k$, with a higher priority, may veto $i$'s complaint. Therefore, we should reject $i$'s complaint, as we prioritize $k$'s welfare over $i$'s, according to the priority of the center they both belong to.
\medskip

This idea can also be interpreted in a manner similar to the concept of farsighted stability \citep{harsanyi1974equilibrium}, which considers the final allocation that may result from admitting certain envies.
%\footnote{An allocation is farsightedly stable if there does not exist a series of blocks that culminate in better allocations for every agent who participates in it. } 
It is also applied in matching theory to investigate weaker definitions of stability, e.g., legality \citep{ehlers2020legal} and
essential stability \citep{troyan2020essentially}. 

We now present an example to highlight the key insight behind our idea.

\begin{example}
\label{example1}
There are two centers $c$ and $c'$, along with three agents and three objects as follows. $N_c=\{1,2\},O_c=\{a,b\}$, $N_{c'}=\{3\}$, and $O_{c'}=\{d\}$.  Agents' preferences and centers' priorities are given as below (agents'
``endowed objects'' are underlined for emphasis). 

$$R_1:\underline{a},\underline{b},d;$$
$$R_2:d,\underline{a},\underline{b};$$
$$R_3:a, \underline{d},b;$$
$$\succ_c:2,1;\text{ and }\succ_{c'}:3.$$

\begin{center}
\begin{tabular}{|c|c|l|l|}
\hline
    Allocation & $1$ & $2$ & $3$ \\\hline
    $x$ & $b$  & $d$ & $a$ \\\hline
    $y$ & $b$  & $a$ & $d$ \\\hline
   % $z$ & $a$  & $d$ & $b$ \\\hline
\end{tabular}
\end{center}

Consider the allocation $x$.
We see that only agent $1$ envies another agent: $x_3=a \mathbin{P_1}b= x_1$. Agent $1$ may argue that his envy is reasonable since $a$ belongs to center $c$, the center to which agent $1$ belongs to (i.e., $1\in N_c$ and $b\in O_c$).

If we admit this envy, then agent $3$ cannot receive $a$. Consequently, agent $3$ must receive $d$; otherwise, agent $3$ would envy whoever receives $d$, and by the same logic used to justify agent $1$'s envy, we must also admit agent $3$'s envy, as $d$ belongs to $c'$, the center to which agent $3$ belongs to.
As a result, agent $2$ cannot receive $d$. In other words, agent $2$'s allotment at $x$ ``depends'' on agent $3$'s reported preference, in the sense of \citet{morrill2015}. 
Therefore, $\{a, b\}$ must be assigned to agents $\{1, 2\}$. Since agent $2 \succ_c 1$, it is reasonable to assign $a$ to agent $2$. 

In summary, if we admit agent $1$'s envy at $x$, the resulting final allocation is $y$. So, agent $1$ does not benefit from his envy as $x_1=y_1=b$, and everyone else ends up worse off. 
This means that $1$'s envy is ``vacuous'' in the sense of \citet{troyan2020essentially}, and suggests that admitting $1$'s envy at $x$ is ``farsightedly unstable.'' 
\hfill $\diamond$  
\end{example}

To formalize our idea, we introduce Morrill's notion of dependence, which captures potential harm arising from reassignments. 
An agent $i$ \textit{depends} on another agent $j$ at $f(R)$ if there exists $R'_j$ such that $f_j(R'_j,R_{-j})=f_i(R)$.

Given $f(R)$, $i$ \textit{externally justifiedly envies} $j$ at $f(R)$ if 
\begin{itemize}
    \item[(1)] $i$ envies $j$ at $f(R)$; 
    \item[(2)] $i$ and $f_j(R)$ belong to the same center, while $j$ does not, i.e., $c(i)\neq c(j)$ and $f_j(R)\in O_{c(i)}$; and
    \item[(3)] for each agent $k\in  \{\ell\in N_{c(i)}\mid  \ell\succ_{c(i)} i \} $, $k$ does not \emph{depend} on $j$ at $f(R)$.\footnote{If $\{\ell\in N_{c(i)}\mid  \ell\succ_{c(i)} i \}$ is empty, then this condition is vacuously satisfied. \label{footnote:vacous}}
\end{itemize}

\noindent \textbf{External fairness}. For each $R\in \mathcal{R}^N$, there is no pair of agents $i,j\in N$ such that $i$ \emph{externally justifiedly envies} $j$ at $f(R)$.
\bigskip

\begin{remark}{\textbf{Strong external fairness}}\ \\
Unlike \emph{internal fairness} and \emph{weak internal fairness} (see Remark~\ref{remark:wif}), here we cannot drop the requirement that $i$ and $f_j(R)$ belong to the same center. 
Otherwise, any cross-center envy that a highest-priority agent $i$ has toward any $j(\not\in N_{c(i)})$ would be classified as externally justified (see Footnote~\ref{footnote:vacous}), leading to an impossibility result. In other words, no mechanism can satisfy this stronger version of \emph{external fairness}.
\hfill $\diamond$ 
\end{remark}

By \emph{external fairness}, we first check whether reassigning $j$'s allotment could potentially harm any agents who have higher priority than agent $i$. If so, we treat $i$'s complaint unjustifiable and reject it.
In other words, \emph{external fairness} serves to rule out envies
that could potentially harm another agent with higher priority.
This aligns well with common sense of fairness, similar to how people generally accept that those with higher priority on a waiting list or in a queue should not be displaced.
%Like \citet{troyan2020essentially}’s essential stability, while the definition itself may seem complicated, we believe \emph{external fairness} is conceptually simple enough for non-experts to understand and helps clarify why certain envies are not justified. This simplicity offers an advantage for practical implementation.
\medskip

Finally, we discuss our last fairness property, which takes an ex-ante perspective. That is, it concerns the fairness of the process itself, based on agents' ``trading opportunities.''

Agents' trading opportunities can be understood as follows. Any allocation $x$ can be decomposed into several trading cycles (these cycles are not necessarily unique).\footnote{Loosely speaking, a trading cycle is an ordered set of agents and objects that describes who trades with whom, and for what. } 
Thus, $x$ can be viewed as being obtained through the following process: Each agent $i$ uses an object $o\in O_{c(i)}$ to trade with other agents for $x_i$ (if $x_i\in O_{c(i)}$, then 
$i$ effectively trades with himself). In this sense, each agent $i$ utilizes his trading opportunities to obtain $x_i$.

The next question is where these trading opportunities come from and how we should evaluate them to ensure fairness.
Loosely speaking, the trading opportunities available to an agent may come from resources that he controls and that he can consume/utilize directly.
Ideally, agents' trading opportunities should be equal when they are treated equally ex-ante.
%\footnote{Equality of opportunity has been extensively analyzed in economics, see, e.g., \citet{roemer2002equality} and \citet{fleurbaey2011theory}. } 
However, this is not the case in our model: Here, agents within the same center are already treated unequally due to that center's priority.
Thus, in our context, a just distribution of agents' trading opportunities must also respect centers' priorities. Specifically, if two agents come from the same center, the agent with higher priority should have more opportunities (or, a weakly larger opportunity set) than the agent with lower priority.
To formalize this idea, we introduce the concept of agents' trading opportunity sets by adopting the algorithm of \citet{papai2000}.
In the algorithm below, at the beginning, each agent with the highest priority ``controls'' all objects that belong to his center, meaning he has the right to use any object within that set. 
% In other words, for each agent with the highest priority, his trading opportunity set equals the set of objects in their center.
Then, we let each of them points to his most preferred object, and let each object points to the highest priority agent in its center.
Once these agents and their corresponding trading objects are identified, the same process is applied to the remaining agents and objects. Specifically, each agent with the highest priority among the remaining agents ``controls'' all remaining objects that belong to his center.
This iterative procedure continues, determining each agent's trading opportunities given the preference profile.

\noindent \rule{1\columnwidth}{1pt}
\paragraph{Construction of trading opportunity sets($\mathcal{T}$)} 
\begin{description}
\item[Step~$0$]: Let $N^1=N$, and $O^1=O$. 
    \item[Step~$\tau$ $(\geq 1)$:] For each $c\in C$ with $N_c^\tau\neq \emptyset$, let $i_c^\tau \equiv \max_{\succ_c} N_c^\tau$ be the agent with the highest priority among the remaining agents in $N_c^\tau$.
    For each $c\in C$, let $i_c^\tau$ point to his best preferred object among remaining objects according to $R_{i_c^\tau}$; and let each remaining object in $O_c^\tau$ point to $i_c^\tau $. Define the trading opportunity set $T_{i_c^\tau}(R)$ as the set of objects pointing to $i_c^\tau$.
    Also, given agents and objects' pointing graph, there exists at least one \emph{cycle}. Remove all agents and objects involved in cycles. Let $N^{\tau+1},O^{\tau+1}$ be the remaining agents and objects, respectively. If $N^{\tau+1}$ is not empty, then proceed to step~$\tau+1$.
    \item[Termination:] The algorithm terminates when no agent remains.
\end{description}
\noindent \rule{1\columnwidth}{1pt}
\medskip

We use $\mathcal{T}$ to represent the algorithm above. 
The algorithm naturally defines each agent's trading opportunity set because it captures a practical implementation as follows: Agents are asked, in the order of their priorities, which available object they would use for consumption or trade, to best meet their preferences. 
For each agent with the highest priority, his trading opportunity set equals the set of objects in their center. 
That is, the opportunity set for each top-ranked agent does not depend on any agent's preference. 
Moreover, for each remaining agent $j$, the trading opportunity set $T_j(R)$ does not depend on $j$'s preference $R_j$, but on the preferences of agents who have higher priority than $j$.  

Now, we present our idea of fairness in the process. 
Since the selected allocation is produced by agents either consuming or trading  objects from their trading opportunity sets, unsatisfied agents have incentives to deviate from the selected allocation if they can consume or trade objects from their trading opportunity sets to achieve higher welfare. We consider these two possibilities. 

Let $x\in X$ be the selected allocation.
First, direct consumption. Suppose that an agent $i$ is unsatisfied at $x$ because he envies another agent $j$'s allotment $x_j$. If $x_j\in T_i(R)$, then it means that $i$ can directly consume $x_j$ from his trading opportunity set.
Second, trading. 
Suppose that a set of agents $S$ who from different centers realize they can improve their welfare by trading objects from their respective trading opportunity sets. In this case, all agents in $S$ can deviate from $x$, by using the objects from their trading opportunity sets and proceeding with the exchange. When cooperation across multiple centers is allowed, such trading represents the most easily achievable form of exchange.
\medskip

Formally, an allocation $x$ is \textit{unfairly produced} at $R$ if 
there exist a list of centers $c_1,\ldots,c_L\in C$, a list of agents $(i_1,\ldots,i_L)\in \Pi_{\ell=1}^L N_{c_\ell}$, and a list of objects $(o_1,\ldots,o_L)\in \Pi_{\ell=1}^L T_{i_\ell}(R)$ such that for each $i_\ell\in \{i_1,\ldots,i_L\}$, $o_{i_\ell +1}\mathbin{P_{i_\ell}}x_{i_\ell}$ (mod $L$). 
Otherwise $x$ is \textit{fairly produced} at $R$.
Note that if there is only one center (so $L=1$), then the condition represents the direct consumption, otherwise it represents the trading among agents $i_1,\ldots,i_L$.

\begin{comment}
either 
\begin{itemize}
    \item[(1)] there exists an agent $i$ and an object $o\in T_i(R)$ such that $o\mathbin{P_i}x_i $; or
    \item[(2)] there exist a list of centers $c_1,\ldots,c_L\in C$, a list of agents $(i_1,\ldots,i_L)\in \Pi_{\ell=1}^L N_{c_\ell}$, and a list of objects $(o_1,\ldots,o_L)\in \Pi_{\ell=1}^L T_{i_\ell}(R)$ such that for each $i\in \{i_1,\ldots,i_L\}$, $o_{i+1}\mathbin{P_i}x_i$ (mod $L$).
\end{itemize}
\end{comment}
\medskip

\noindent\textbf{Procedural fairness}. For each $R\in \mathcal{R}^N$, $f(R)$ is \emph{fairly produced} at $R$.
\medskip

\emph{Procedural fairness} offers a fresh perspective for evaluating mechanisms. For instance, while \citet{papai2000} hierarchical exchange mechanisms satisfy \emph{procedural fairness}, \citet{pycia2017}'s trading cycles mechanisms do not.\footnote{In \citet{pycia2017}'s trading cycles mechanisms, brokers cannot receive the objects they control, which results in a violation of \emph{procedural fairness}.}
This suggests the value of \emph{procedural fairness} in distinguishing mechanisms.
Furthermore, since trading cycles mechanisms are \emph{Pareto efficient}, its violation of \emph{procedural fairness} also implies that \emph{procedural fairness} is logically independent of \emph{Pareto efficiency} (and \emph{pair efficiency}).\footnote{See Appendix~\ref{appendix:examples} for independence examples.}
\medskip

Note that \emph{procedural fairness} implies \emph{queuewise rationality}, but not vice versa.

\begin{lemma}
All \emph{procedurally fair} mechanisms satisfy \emph{queuewise rationality}; however, the opposite is not true.\label{lemma:pfqr}
\end{lemma}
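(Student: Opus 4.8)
The plan is to prove the two halves of the lemma separately: first the implication, then the non-implication via an explicit example.

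For the implication, fix a procedurally fair mechanism $f$, a problem $R$, a center $c$, and an agent $i \in N_c$. I must show $f_i(R) \mathbin{R_i} o_i(R)$, where $o_i(R)$ is $i$'s $r_i$-th favorite object in $O_c$. The key observation is a lower bound on the trading opportunity set: I claim that $o_i(R) \mathbin{R_i} t$ fails to be possible for all $t \in T_i(R)$ only if $T_i(R)$ is ``large enough'' — more precisely, I will show that $T_i(R)$ always contains some object $o \in O_c$ with $o \mathbin{R_i} o_i(R)$. To see this, run the algorithm $\mathcal{T}$ and track how objects of $O_c$ leave the pool. At any step $\tau$ before $i$ is removed, the agent acting for $c$ is $i_c^\tau = \max_{\succ_c} N_c^\tau$, and since $i$ has not yet been removed, $i_c^\tau \succ_c i$ or $i_c^\tau = i$; hence the number of agents of $N_c$ removed strictly before $i$ is at most $r_i - 1$. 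Each such removal of an agent of $N_c$ in a cycle takes away at most ... (one object of $O_c$ per center-$c$ agent removed, since in a cycle the object pointing to the removed center-$c$ agent that is "consumed" is one object of $O_c$; more carefully, only objects of $O_c$ point to center-$c$ agents, and each cycle through a center-$c$ agent removes that agent together with exactly the objects of $O_c$ assigned to it). I will argue that when $i$ himself is finally processed, at least one of his top $r_i$ objects of $O_c$ — in particular an object at least as good as $o_i(R)$ — is still available and points to $i$, hence lies in $T_i(R)$. Given such $o \in T_i(R)$ with $o \mathbin{R_i} o_i(R)$, condition (1) in the definition of ``unfairly produced'' (applied with the singleton deviation) forces $f_i(R) \mathbin{R_i} o$, and transitivity gives $f_i(R) \mathbin{R_i} o_i(R)$, which is exactly \emph{queuewise rationality}.

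For the converse (that \emph{queuewise rationality} does not imply \emph{procedural fairness}), I will exhibit a single problem and a mechanism — or more simply, a single allocation chosen by some mechanism on that problem — that satisfies the queuewise bound for every agent yet is unfairly produced in the pairwise sense of condition (2). The natural candidate is a small instance with two centers, each with two agents, where $\mathcal{T}$ gives two agents from different centers trading opportunity sets that would let them swap to mutual improvement, while a different mechanism instead hands each of them their queuewise-guaranteed object (so no agent can individually improve via direct consumption, and the queuewise bound holds with equality for the relevant agents, but the blocking pair of condition (2) exists). I would take the $\mathcal{T}$-outcome itself as the "trading opportunity" benchmark and then define $f$ to output, on this profile, an allocation that differs from the TTC/hierarchical-exchange outcome precisely on that pair; checking queuewise rationality is then a four-agent case check, and checking the failure of \emph{procedural fairness} is immediate from the constructed blocking pair.

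The main obstacle is the counting argument in the forward direction: one must pin down exactly which objects of $O_c$ have been consumed by the time agent $i$ is processed in $\mathcal{T}$, and show that strictly fewer than $r_i$ of $i$'s own top choices in $O_c$ have been removed — equivalently, that the object pointed at $i$ when $i$ first lies on a cycle is weakly preferred (by $R_i$) to $o_i(R)$. The subtlety is that cycles can remove several center-$c$ agents (and their objects) simultaneously, and objects of $O_c$ can also be removed while pointing to center-$c$ agents other than $i$; one has to confirm that the total count of $O_c$-objects removed before $i$ is processed is at most $r_i-1$, using that only center-$c$ agents ever "own"/point-to $O_c$-objects in the algorithm and that at most $r_i-1$ center-$c$ agents precede $i$ in $\succ_c$. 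Everything else — the transitivity step, and the two verifications in the counterexample — is routine.
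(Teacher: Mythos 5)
Your proposal is correct and follows essentially the same route as the paper: the forward direction rests on the same counting fact that at most $r_i-1$ objects of $O_c$ are absent from $T_i(R)$ (which you spell out in more detail than the paper, which simply asserts it), so some object weakly $R_i$-preferred to $o_i(R)$ lies in $T_i(R)$, and condition (1) of unfair production plus transitivity yields \emph{queuewise rationality}. Your counterexample for the converse is left schematic, but the cross-center pairwise-swap pattern you describe is exactly what the paper instantiates with a three-agent instance ($N_{c_1}=\{1,2\}$, $N_{c_2}=\{3\}$, allocation $(o_1,o_2,o_3)$ unfairly produced via $(o_1,o_3)\in T_1(R)\times T_3(R)$).
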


\begin{remark}{\textbf{Relation to other properties}}\ \\
        If for each $c\in C$, $|N_c|=|O_c|=1$, i.e., reduced to the Shapley-Scarf housing market model, then \emph{external fairness} coincides with \emph{individual rationality}, and \emph{procedural fairness} coincides with \emph{weak core stability}; and if $|C|=1$, i.e., reduced to the house allocation model, then \emph{internal fairness}, \emph{procedural fairness}, and \citet{svensson1994}'s~\emph{weak fairness} all coincide with one another.\footnote{Weak fairness states that an agent should never envy another agent who has a lower priority than himself.} Also, in Appendix~\ref{appendix:examples}, we provide several examples to show that our three fairness properties are logically independent for general multi-center allocation problems. \hfill $\diamond$ 
\end{remark}

\subsection{Top Trading Cycles}
For each problem $R$, the top trading cycles mechanism (TTC) selects the allocation determined by the following \emph{top trading cycles algorithm} at $R$, which we call $TTC(R)$.

\noindent \rule{1\columnwidth}{1pt}
\paragraph{Top Trading Cycles Algorithm:} 
\begin{description}
\item[Step~$0$]: For each $c\in C$ and each $o\in O_c$, let $\succ_o=\succ_c$ be the associated priority of $o$.  Let $\succ_O=(\succ_o)_{o\in O}$ be objects' priority profile.
    \item[Step~$\tau$ $(\geq 1)$:] Each remaining agent points to his most-preferred remaining object given $R$. 
    Each remaining object points to the agent with the highest priority among the remaining agents, according to $\succ_O$.\footnote{Note that, since for each center $c$ the balance constraint $|N_c|=|O_c|$ holds, if an object in $O_c$ remains at step~$t$, then some agent in $N_C$ must also remain at step~$t$. As a result, each object in $O_c$ points only to agents in $N_c$.}
    There exists at least one \emph{cycle}. \emph{Execute} all cycles by assigning each agent involved in a cycle the object to which he points. Remove all agents and objects involved in a cycle. If some objects remain, then proceed to step~$\tau+1$.
    \item[Termination:] The algorithm terminates (in at most $n$ steps) when no agent~/~object remains.
\end{description}
\noindent \rule{1\columnwidth}{1pt}

TTC satisfies all of our fairness properties.

\begin{proposition}
 TTC satisfies \textbf{internal fairness} (and hence \textbf{weak internal fairness}), \textbf{external fairness}, and \textbf{procedural fairness}. \label{proposition:sufficicent}
\end{proposition}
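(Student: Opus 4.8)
The plan is to verify each of the three properties in turn, exploiting the structure of the TTC algorithm: at each step a cycle is executed, and objects point to the highest-priority remaining agent in their center. The common tool throughout is the observation that if $o\in O_c$ is removed at step $\tau$, then every agent in $N_c$ who is still present at step $\tau$ has weakly lower priority (under $\succ_c$) than the agent to whom $o$ was assigned; equivalently, agents of a given center leave the algorithm in order of $\succ_c$ among those who receive an object from their own center, and more generally any agent $i\in N_c$ who is assigned some $o\in O_c$ is removed no later than any $j\in N_c$ with $j\succ_c i$ who is also assigned an object (the higher-priority agent's object points only to agents ranked at least as high as that agent). I would isolate this as a preliminary observation and then reuse it.

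For \textbf{internal fairness}: suppose $i$ internally justifiedly envies $j$, so $c(i)=c(j)=c$, $i\succ_c j$, and $TTC_j(R)\mathbin{P_i}TTC_i(R)$. Let $\tau$ be the step at which $i$ is removed; at that step $i$ points to his most-preferred remaining object, and $TTC_i(R)$ is at least that object. Since $TTC_j(R)\mathbin{P_i}TTC_i(R)$, the object $TTC_j(R)$ must already have been removed before step $\tau$. But $TTC_j(R)=:o$; if $o\in O_c$, then by the preliminary observation $o$ can only have been assigned to an agent in $N_c$ ranked at least as high as the agent it was pointing to, and since $i$ is still present and $i\succ_c j$... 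I would argue that $o$ being removed before $i$ leaves forces $j$ to have been removed before $i$, yet $j$'s object $o$ would then have pointed to $i$ rather than $j$ (because $i\succ_c j$ and $i$ was still present), a contradiction. If instead $o\notin O_c$, then $o$ points to the highest-priority remaining agent in $c(j')$ where $o\in O_{c(j')}$, which is irrelevant to $i$'s priority; here I use that $i$ envies $j$ but $i$ was already assigned $TTC_i(R)$, and $o$ left before $i$ did — one checks that at the step $o$ was executed, $j$ received it, so $j$ left before $i$; but then the object that was part of $j$'s cycle pointing "into" the cycle is the one $i$ cares about only if... I expect this branch to need a short separate argument showing the envy cannot persist. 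The external-fairness argument will actually subsume the cross-center case.

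For \textbf{external fairness}: suppose $i$ externally justifiedly envies $j$ at $TTC(R)$: $c(i)\neq c(j)$, $o:=TTC_j(R)\in O_{c(i)}$, $o\mathbin{P_i}TTC_i(R)$, and no agent $k\in N_{c(i)}$ with $k\succ_{c(i)}i$ depends on $j$. Let $\tau$ be the step at which $o$ is executed (assigned to $j$). Since $o\in O_{c(i)}$ and it points to the highest-priority remaining agent of $c(i)$, that agent is some $k$ with either $k=i$ or $k\succ_{c(i)}i$ (because $i$ was removed later, as $o$ left before $i$ did — here I invoke $o\mathbin{P_i}TTC_i(R)$ so $o$ is gone by the time $i$ points). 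The case $k=i$ is impossible since then $i$, not $j$, would be in that cycle receiving $o$. So $k\succ_{c(i)}i$, and $k$ is in the cycle at step $\tau$. I then show $k$ depends on $j$: by changing $R_j$ so that $j$ points elsewhere at step $\tau$, the cycle breaks and $k$ (still present, still pointed at by $o$) gets $o=TTC_i(R)$'s superior object; more precisely I need $k$ to be able to receive exactly $TTC_i(R)$... wait, the definition requires $f_k(R'_j,R_{-j})=f_i(R)$, i.e., $k$ depends on $j$ means $k$ could receive $i$'s allotment. I would instead track the cycle containing $i$ at the step $i$ leaves and argue that $j$'s presence in an earlier cycle is what prevents $k$ from reaching into $i$'s position; constructing the explicit $R'_j$ that reroutes the algorithm so $k$ ends with $TTC_i(R)$ is the crux. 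The main obstacle is exactly this: producing the deviation $R'_j$ witnessing the dependence of some higher-priority $k$ on $j$, which contradicts condition (3). I expect to need a careful induction on steps showing that removing $j$'s demand for $o$ at step $\tau$ propagates so that $k$ eventually occupies the cycle that $i$ occupied, inheriting $TTC_i(R)$.

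For \textbf{procedural fairness}: I would show $TTC(R)=x$ is fairly produced by comparing the TTC algorithm with the algorithm $\mathcal{T}$ defining trading opportunity sets. Both process agents center-by-center in priority order; I claim that for every agent $i$, $T_i(R)$ (from $\mathcal{T}$) is contained in the set of objects that were "available to" $i$'s center when $i$ was the highest-priority remaining member, and that in TTC agent $i$ receives his most-preferred object among the objects still present when his cycle forms, which weakly dominates everything in $T_i(R)$ for the same reason. Concretely, if $o\in T_i(R)$ then in $\mathcal{T}$, $o\in O_{c(i)}$ and $o$ was still present when $i$ became top of his center; I argue by induction that the set of remaining objects at the analogous stage of TTC is a superset (or that $i$ is removed in TTC no later, so $o$ was still available), hence $TTC_i(R)\mathbin{R_i}o$, killing clause (1) of "unfairly produced." For clause (2), given $o'\in T_i(R)$, $o\in T_j(R)$ with $o'\mathbin{P_i}x_i$, $o\mathbin{P_j}x_j$ — but $o'\mathbin{P_i}x_i$ already contradicts clause (1), so (2) is vacuous once (1) is handled. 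Thus the real content is the single inequality $TTC_i(R)\mathbin{R_i}o$ for all $o\in T_i(R)$, which follows from Lemma~\ref{lemma:pfqr}'s style of argument or directly; I would cite the parallel structure of the two algorithms. Among the three parts, external fairness is the hard one and I would budget most of the proof there.
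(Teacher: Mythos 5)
Your skeleton for each property is broadly the right one (the removal-order invariant, the agent $k$ pointed at by the envied object, the lockstep comparison of $\mathcal{T}$ with TTC), but there are two concrete errors that derail the two harder parts. First, in the external-fairness argument you invert the definition of dependence: ``$k$ depends on $j$'' means there exists $R'_j$ with $f_j(R'_j,R_{-j})=f_k(R)$ --- that is, $j$ could grab \emph{$k$'s} allotment by misreporting --- not that $k$ could be rerouted to receive $TTC_i(R)$. Once read correctly, the step you flag as ``the crux'' collapses: the object $o=TTC_j(R)\in O_{c(i)}$ points at step $\tau$ to the highest-priority remaining agent $k$ of $c(i)$, so $k$ and $j$ lie in the \emph{same} cycle; $j$ can therefore top-rank $TTC_k(R)$ (still available at step $\tau$, and the first $\tau-1$ steps are unaffected since $j$ is in no earlier cycle) and receive it, so $k$ depends on $j$ and condition (3) fails. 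No induction or rerouting construction is needed; this is exactly how the paper argues. Relatedly, your hope that external fairness ``subsumes'' the branch of internal fairness where $TTC_j(R)\notin O_{c(i)}$ cannot work, since external fairness requires $c(i)\neq c(j)$. The clean fix for internal fairness is the unqualified invariant: only objects in $O_c$ ever point to members of $N_c$, and always to the highest-priority remaining one, so agents of each center are removed in strict priority order; hence $i\succ_c j$ implies $TTC_j(R)$ is still present when $i$ is removed, and $TTC_i(R)\mathbin{R_i}TTC_j(R)$ with no case split.

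Second, your dismissal of clause (2) of ``unfairly produced'' as vacuous is based on a misreading of the quantifiers: the definition has $(o,o')\in T_i(R)\times T_j(R)$ with $o'\mathbin{P_i}x_i$, i.e., the object $i$ covets lies in \emph{$j$'s} opportunity set (a subset of $O_{c(j)}$), so ruling out clause (1) says nothing about it. Pairwise trades across centers are precisely the content of procedural fairness (see the example in the proof of Lemma~\ref{lemma:pfqr}, where $(o_1,o_3)\in T_1\times T_3$ blocks a queuewise-rational allocation), and the paper handles them with a separate case in the induction: if $j$ is removed at a step $\geq L$ then $T_j(R)$ consists of objects still remaining at step $L$, over which $i$'s TTC allotment is already maximal; if $j$ is removed earlier, the induction hypothesis applied to $j$ kills the trade. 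You need that second case explicitly; as written your proof of procedural fairness only establishes the no-direct-consumption half.
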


\section{Main characterizations}
\label{sec:result}

\subsection{Internal fairness and external fairness}

We characterize TTC through the first four properties that we presented previously. 
\begin{theorem}
    A mechanism $f$ satisfies \textbf{strategy-proofness}, \textbf{pair efficiency},
    \textbf{internal fairness}, and \textbf{external fairness}, if and only if $f=TTC$. \label{thm}
\end{theorem}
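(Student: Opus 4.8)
\textbf{Proof proposal for Theorem~\ref{thm}.}

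The sufficiency direction (that $TTC$ satisfies the four axioms) is already covered by Proposition~\ref{proposition:sufficicent} together with the well-known fact that $TTC$ is \emph{strategy-proof} and \emph{pair efficient}; so the content of the theorem is the necessity direction. The plan is to take an arbitrary mechanism $f$ satisfying \emph{strategy-proofness}, \emph{pair efficiency}, \emph{internal fairness}, and \emph{external fairness}, and show $f(R) = TTC(R)$ for every problem $R$ by induction on the rounds of the TTC algorithm. Concretely, I would fix $R$, run the TTC algorithm, and let $\tau = 1, 2, \dots$ index its rounds; the induction hypothesis at the start of round $\tau$ is that every agent removed in an earlier round receives under $f(R)$ exactly the object TTC assigns. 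I would then argue that every agent in a cycle of round $\tau$ also gets the TTC object under $f(R)$.

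For the base case / inductive step, consider round $\tau$ and an agent $i$ in some TTC cycle of that round. Let $a = TTC_i(R)$ be the object $i$ points to. A first key sub-step is a preference-replacement argument using \emph{strategy-proofness}: one shows that it suffices to verify the claim when every agent not yet removed reports a truncated preference that ranks only the objects still available at round $\tau$ (or, more standardly, one shows $f$ is invariant under bumping already-satisfied agents' best objects to the top, since those are exactly what they receive). Then, within a single TTC cycle $i_1 \to a_1 \to i_2 \to a_2 \to \cdots \to i_k \to a_k \to i_1$ (where $a_m$ points to $i_m$ because $i_m$ is the highest-$\succ_{c(a_m)}$ remaining agent among those who own $a_m$'s center, and $i_m$ points to $a_m$ as its top remaining object), I would show $f_{i_m}(R) = a_{m-1}$ for all $m$. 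Suppose not. Since $a_{m-1}$ is $i_m$'s favorite among remaining objects and all previously-removed agents already hold their TTC objects (so $i_m$ cannot be assigned any of those under $f(R)$ either — else \emph{pair efficiency} with a removed agent would be violated, or a direct swap argument applies), $i_m$ must envy whoever holds $a_{m-1}$ under $f(R)$, call that agent $j$. Now split into cases: if $c(j) = c(i_m)$, then since $i_m$ is the top remaining agent of its center among owners of $a_{m-1}$'s center and $a_{m-1} \in O_{c(i_m)}$ when the cycle is internal, \emph{internal fairness} is violated (this needs the observation that $i_m \succ_{c(i_m)} j$, which follows because $j$ is still "remaining" in the relevant sense or already removed with a worse-for-this-purpose status — here one leans on the induction hypothesis to rule out $j$ being an earlier-removed agent). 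If $c(j) \ne c(i_m)$, then $a_{m-1} \in O_{c(i_m)}$, so $i_m$ externally justifiedly envies $j$ \emph{unless} some higher-$\succ_{c(i_m)}$ agent $k$ depends on $j$; but every such $k$ was removed in an earlier TTC round (since $i_m$ is the highest-priority remaining agent of its center among owners of center $c(a_{m-1})$... wait — more carefully, $k \succ_{c(i_m)} i_m$ means $k$ was removed earlier, as TTC removes higher-priority agents of a center weakly before lower ones whenever they compete for that center's objects), and by the induction hypothesis $k$ holds its fixed TTC object, which does not depend on $j$'s report — contradicting the dependence. Either way we reach a contradiction, so $f_{i_m}(R) = a_{m-1}$, completing the step.

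The main obstacle I anticipate is the dependence-ruling-out argument in the external-fairness case: one must show cleanly that no agent $k$ with $k \succ_{c(i)} i$ can \emph{depend} (in Morrill's sense) on the agent $j$ currently holding $i$'s target object, and this requires carefully tracking, via the induction hypothesis, that all such higher-priority $k$ have already been assigned their TTC objects in strictly earlier rounds, and that $f$'s value at those agents is "locked" regardless of $j$'s deviations — which itself needs a small lemma invoking \emph{strategy-proofness} and \emph{pair efficiency} to show that once the higher-priority structure is fixed, $f$ agrees with TTC on the earlier rounds robustly. A secondary subtlety is handling cycles that are not "internal" (i.e., where $i$'s target object lies outside $c(i)$) versus internal ones, and making sure the case split above is exhaustive; in the genuinely cross-center case one must confirm that condition (2) of external justified envy ($f_j(R) \in O_{c(i)}$) actually holds, which it does because $a_{m-1} = f_j(R) \in O_{c(i_m)}$ precisely when $i_m$ points into its own center — and when $i_m$ points outside its own center, a separate (easier) argument using \emph{pair efficiency} and the structure of TTC cycles applies. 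I would isolate these robustness facts as preliminary lemmas before running the main induction.
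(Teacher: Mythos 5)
Your proposal correctly identifies the sufficiency direction as Proposition~\ref{proposition:sufficicent} plus known facts, and your treatment of the case where an agent's TTC target lies in his \emph{own} center (via \emph{internal fairness} when the holder is a same-center agent, via \emph{external fairness} plus the non-dependence of higher-priority agents otherwise) matches the paper's Claim~\ref{claim1}. But there is a genuine gap in the main case. In any TTC cycle of size $K\geq 2$, each object points to the highest-priority remaining agent of its \emph{own} center, so all agents in one cycle come from distinct centers and agent $i_m$'s target object belongs to $O_{c(i_{m+1})}$, \emph{not} to $O_{c(i_m)}$. Consequently condition~(2) of externally justified envy ($f_j(R)\in O_{c(i)}$) fails for every agent in such a cycle, and \emph{external fairness} gives no traction at all; your sentence ``if $c(j)\neq c(i_m)$, then $a_{m-1}\in O_{c(i_m)}$'' is only true for singleton cycles. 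The multi-agent case, which you defer to ``a separate (easier) argument using pair efficiency,'' is in fact where all the work lies: the paper handles it by choosing a \emph{globally minimal} counterexample profile under \citet{ekici2022}'s similarity index, running a chain of preference manipulations ($R'_{i_1}$, $R'_{i_K}$, $R''_{i_K}$) that uses \emph{strategy-proofness} repeatedly to pin down $f_{i_1}(R')=x_{i_K}$ and $f_{i_K}(R')=x_{i_1}$, invoking minimality to guarantee that a shortened cycle is executed at the manipulated profile, and only then contradicting \emph{pair efficiency} --- at $R'$, not at $R$.

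This also explains why your overall architecture cannot be patched locally: your induction hypothesis is stated for a single fixed profile $R$, but the decisive contradiction must be derived at manipulated profiles $(R'_{i_1},R_{-i_1})$, $R'$, and $(R''_{i_K},R'_{-i_K})$, where a fixed-profile hypothesis says nothing. The ``small lemma'' you anticipate --- that $f$ agrees with TTC on earlier rounds ``robustly'' across deviations --- is essentially the entire theorem restated, and the paper obtains it not as a lemma but as a consequence of minimizing the similarity index over \emph{all} profiles simultaneously. To complete your argument you would need to replace the per-profile round induction with such a global minimal-counterexample (or an equivalent double induction over profiles and cycle sizes), at which point you would be reconstructing the paper's proof.
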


Note that from the proof of Theorem~\ref{thm} below, one can see that our characterization in Theorem~\ref{thm} also holds if \emph{internal fairness} is weakened to \emph{weak internal fairness} (see Remark~\ref{remark:wif}).

\subsubsection{Proof of Theorem~\ref{thm}}
\label{sec:proofthm1}
By \citet{pycia2017} and Proposition~\ref{proposition:sufficicent}, 
we know that TTC satisfies the four properties.

Thus, we only need to prove the \textbf{uniqueness part}. 
Let $f$ be \emph{strategy-proof}, \emph{pair efficient}, \emph{internally fair}, and \emph{externally fair}. 
The proof below is by contradiction.
In short, assuming $f\neq TTC$, we use \citet{ekici2022}'s notion of similarity to select a minimal preference profile on which the two mechanisms differ, and then construct a sequence of preference manipulations by agents in a non-executed cycle to derive a contradiction with efficiency of $f$.

For each $\tilde{R}\in \mathcal{R}^N$, and $\tau=1,\ldots,n$, let $\mathcal{C}_\tau(\tilde{R})$ be the set of cycles that are obtained at step~$\tau$ of $TTC(\tilde{R})$, and let $\mathcal{C}(\tilde{R})$ be the set of all trading cycles that are obtained via $TTC(\tilde{R})$. 
For each cycle $\mathcal{C}\in \mathcal{C}(\tilde{R})$, let $S_\mathcal{C}$ be the agents that are involved in $\mathcal{C}$. We say that $\mathcal{C}$ is executed at $f(\tilde{R})$, if for each $i \in S_\mathcal{C}$, $f_{i}(\tilde{R})=TTC_{i}(\tilde{R})$.
\medskip

By contradiction, suppose that $f\neq TTC$. 
We start by selecting a profile which is ``minimal'' according to \citet{ekici2022}'s ``similarity.''
The similarity between $f$ and $TTC$ is a function $\rho$ such that for each $\tilde{R}\in \mathcal{R}^N$, $\rho(\tilde{R})$ is defined as below.
\begin{itemize}
	\item At step~$\tau(\geq 1)$ in $TTC(\tilde{R})$. If there exist cycle(s) in $\mathcal{C}_{\tau}(\tilde{R})$ that are not executed at $f(\tilde{R})$, let $\hat{C}\in \mathcal{C}_{\tau}(\tilde{R})$ be the cycle that has the smallest number of involved agents among all non-executed cycles at $f(\tilde{R})$, then $\rho(\tilde{R})=(\tau,|S_{\hat{C}}|)$; otherwise proceed to step~$\tau+1$. If all cycles in $\mathcal{C}(\tilde{R})$ are executed, then $\rho(\tilde{R})=(n+1,n+1)$.
\end{itemize}

For each pair of similarity $(x_1,x_2),(y_1,y_2)\in \mathbb{N}^2$, we say $(x_1,x_2)< (y_1,y_2)$ if $x_1<y_1$ or [$x_1=y_1$ and $x_2<y_2$]; and $(x_1,x_2)\leq (y_1,y_2)$ if $(x_1,x_2)< (y_1,y_2)$ or $(x_1,x_2)= (y_1,y_2)$. 
We also define $>$ and $\geq$ similarly.
%Similarly, $(x_1,x_2)> (y_1,y_2)$ if $x_1>x_2$ or [$x_1=x_2$ and $y_1>y_2$]; and $(x_1,x_2)\geq (y_1,y_2)$ if $(x_1,x_2)> (y_1,y_2)$ or $(x_1,x_2)= (y_1,y_2)$.
\medskip

Let $R\in \mathcal{R}^N$ be such that for each $\tilde{R}\in \mathcal{R}^N$, $\rho(R)\leq \rho(\tilde{R})$. Let $t=(t_1,t_2)\equiv \rho(R)$ and $x\equiv TTC(R)$.\footnote{Note that for each $R\in \mathcal{R}^N$, $\rho(R)\leq (n+1,n+1)$, and $\rho(R)= (n+1,n+1)$ if and only if $f(R)=TTC(R)$.}

Thus, $f(R)$ executes all cycles in $\cup_{\tau=1}^{t_1-1} \mathcal{C}_\tau(R)$, but it does not execute some cycle in $\mathcal{C}_{t_1}(R)$. Let $\mathcal{C}\in \mathcal{C}_{t_1}(R)$ be the cycle that is not executed with $|S_\mathcal{C}|=t_2$.
Let $O^{t_1}$ be the set of objects that are remaining at step~$t_1$ under $TTC(R)$.
By the definition of TTC, we know that for each $i\in S_C$, $x_i$ is $i$'s most preferred object in $O^{t_1}$, i.e., for each $o\in O^{t_1}$, $x_i\mathbin{R_i} o$.

\begin{claim}
    $|S_\mathcal{C}|=t_2\geq 2$. \label{claim1}
\end{claim}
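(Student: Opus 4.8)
\textbf{Proof proposal for Claim~\ref{claim1}.}

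The plan is to argue by contradiction: suppose the non-executed cycle $C \in \mathcal{C}_{t_1}(R)$ is a singleton, i.e.\ $|S_C| = t_2 = 1$. Then $C$ consists of a single agent $i$ pointing to a single object $o$ that points back to $i$; by the TTC rules this means $o = x_i$ is agent $i$'s most preferred object among the remaining objects $O^{t_1}$, and $i$ has the highest priority (among remaining agents) at the center $c$ to which $o$ belongs, i.e.\ $o \in O_c$ and $i = \max_{\succ_c} N_c^{t_1}$. Since $C$ is not executed at $f(R)$, we have $f_i(R) \neq x_i = o$, and since all earlier cycles are executed, the object $o$ is not assigned by $f$ to anyone who was matched in steps $1,\dots,t_1-1$; hence $f$ assigns $o$ to some agent $j$ with $f_j(R) = o$, where $j$ is still ``remaining'' at step $t_1$ in the TTC run — in particular $j \neq i$ and $f_i(R) \mathbin{P_i}$-worse-than $o$ would give $i$ envy toward $j$.

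The key step is then to derive a contradiction from whichever fairness axiom applies to the pair $(i,j)$, splitting on whether $c(j) = c$ or $c(j) \neq c$. Note first that $i$ envies $j$ at $f(R)$: indeed $f_j(R) = o = x_i$ is $i$'s favorite among $O^{t_1}$, and since $C$ is not executed, $f_i(R) \neq o$; I would need the small observation that $f_i(R) \in O^{t_1}$ as well, so that $o \mathbin{P_i} f_i(R)$ — this should follow from the fact that $f$ executes all earlier TTC cycles, so every object outside $O^{t_1}$ is handed by $f$ to an agent removed before step $t_1$, leaving $f_i(R)$ among the step-$t_1$-remaining objects. In the case $c(j) = c$: since $i = \max_{\succ_c} N_c^{t_1}$ and $j \in N_c^{t_1}$ with $j \neq i$, we get $i \succ_c j$, so $i$ \emph{internally justifiedly envies} $j$, contradicting \emph{internal fairness}. (If one only assumes \emph{weak internal fairness}, note $f_j(R) = o \in O_c = O_{c(i)}$, so that weaker axiom still applies.) In the case $c(j) \neq c$: we have $f_j(R) = o \in O_c = O_{c(i)}$ while $j \notin N_{c(i)}$; moreover $\{\ell \in N_{c(i)} \mid \ell \succ_{c} i\} = \emptyset$ because $i$ has the top priority among the remaining agents of $c$ — but the higher-priority agents of $c$ were all removed in earlier steps, so they are \emph{not} in $N_{c(i)}^{t_1}$; the subtlety is that $N_{c(i)}$ is the \emph{full} agent set of the center, which may contain agents with priority above $i$ who were matched earlier. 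So this sub-case requires showing that any such higher-priority $k$ does \emph{not} depend on $j$ at $f(R)$.

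I expect this last point — ruling out dependence of the earlier-removed higher-priority agents $k \in N_{c(i)}$ with $k \succ_c i$ on $j$ — to be the main obstacle. The natural route is: such a $k$ was matched by both $TTC$ and $f$ in some step $\tau < t_1$ (using that $f$ executes all cycles in $\cup_{\tau < t_1}\mathcal{C}_\tau(R)$), so $f_k(R) = x_k = TTC_k(R)$; one then argues via \emph{strategy-proofness} and the structure of TTC that $k$'s allotment cannot be changed to $o = f_j(R)$ by any unilateral deviation of $j$, i.e.\ $f_j(R'_j, R_{-j}) = f_k(R)$ is impossible — intuitively because $k$ is matched ``early'' and $j$'s report, being about an agent removed strictly later, cannot reach back to affect $k$. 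Making this precise may need a short lemma that if $k$'s TTC cycle forms at a step before $j$ is removed and before the object $o$ is touched, then $j$ cannot be part of $k$'s cycle regardless of $j$'s report, combined with a known monotonicity/lower-invariance property of TTC under \emph{strategy-proof} mechanisms. Once dependence is excluded, $i$ \emph{externally justifiedly envies} $j$, contradicting \emph{external fairness}, and the claim $t_2 \geq 2$ follows.
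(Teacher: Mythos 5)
Your overall architecture matches the paper's proof exactly: assume $|S_C|=1$, identify the envied agent $j$ with $f_j(R)=x_i$, verify that $i$ envies $j$ (your observation that $f_i(R)\in O^{t_1}$ because all earlier cycles are executed is precisely the justification needed), and split on whether $c(j)=c(i)$, contradicting \emph{(weak) internal fairness} in the first case and \emph{external fairness} in the second. The first case and the setup are complete and correct.

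The gap is in the step you yourself flag as the main obstacle: showing that each $k\in N_{c(i)}$ with $k\succ_{c}i$ does not depend on $j$. Your proposed route — ``strategy-proofness combined with a lower-invariance property of TTC'' — is not the right tool and would not close the gap: \emph{strategy-proofness} constrains how $f_j$ responds to $j$'s own deviations, but says nothing about how $f_k$ moves when $j$ deviates, and $f$ is not assumed to be TTC, so invariance properties of the TTC algorithm do not transfer to $f$. What actually does the work in the paper is the minimality of the profile $R$ under the similarity index $\rho$: for every $R'_j$ one has $\rho(R'_j,R_{-j})\geq\rho(R)=(t_1,t_2)$, which by definition of $\rho$ forces $f(R'_j,R_{-j})$ to execute all cycles in $\cup_{\tau=1}^{t_1-1}\mathcal{C}_\tau(R'_j,R_{-j})$; since $j$ is still present at step $t_1$ of $TTC(R)$, these early cycles coincide with those of $TTC(R)$ and in particular contain every such $k$, giving $f_k(R'_j,R_{-j})=x_k$ for all $R'_j$. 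Because an allocation is a bijection and $j\neq k$, this yields $f_j(R'_j,R_{-j})\neq x_k=f_k(R)$ for every $R'_j$, i.e.\ $k$ does not depend on $j$. So the missing ingredient is not a property of TTC or of \emph{strategy-proofness} but the way $R$ was selected at the outset of the uniqueness argument; once you invoke that, your proof is the paper's proof.
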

\begin{proof}[Proof of Claim~\ref{claim1}]
Suppose that $|S_\mathcal{C}|=1$, hence it is without loss of generality to assume that $S_\mathcal{C}=\{i\}$ and $\mathcal{C}=(i\to x_i\to i)$. It implies that $x_i\in O_{c(i)}$.
Since $f(R)$ does not execute $\mathcal{C}$, there is an agent $j\neq i$ such that $f_j(R)=x_i$.
Hence $f_j(R)\neq x_j$, this implies that $j$ does not receive $x_j$
before step~$t_1$ under $TTC(R)$.
Since $f_j(R)$ is $i$'s most preferred object in $O^{t_1}$, $f_j(R) \mathbin{P_i}f_i(R)$, i.e., $i$ envies $j$ at $f(R)$.
There are two cases. 

\textbf{Case~One}: $c(j)=c(i)$. In other words, $j\in N_{c(i)}$.
By the selection of $R$, $f(R)$ executes all cycles in $\cup_{\tau=1}^{t_1-1} \mathcal{C}_\tau(R)$. 
By the definition of TTC, each  $k\in N_{c(i)}$ with higher priority than $i$ receives $x_k$ from some cycle in $\cup_{\tau=1}^{t_1-1} \mathcal{C}_\tau(R)$.
Thus, for each agent $k\in N_{c(i)}$ with $k\mathbin{\succ_{c(i)}} i$, $f_k(R)=x_k$. Therefore, $i\mathbin{\succ_{c(i)}} j$.
So, \emph{(weak) internal fairness} is violated.

\textbf{Case~Two}: $c(j)\neq c(i)$, and hence $j\not \in N_{c(i)}$.
By the selection of $R$, for each $R'_j\in\mathcal{R}$, $\rho(R'_j,R_{-j})\geq t=\rho(R)$. Thus, $f(R'_j,R_{-j})$ executes all cycles in $\cup_{\tau=1}^{t_1-1} \mathcal{C}_\tau(R'_j,R_{-j})$. 
Recall that $j$ does not receive $x_j$ before step~$t_1$ under $TTC(R)$.
Thus, by the definition of TTC, $\cup_{\tau=1}^{t_1-1} \mathcal{C}_\tau(R'_j,R_{-j})=\cup_{\tau=1}^{t_1-1}\mathcal{C}_\tau(R)$.
Hence, for each agent $k\in N_{c(i)}$ with $k\mathbin{\succ_{c(i)}} i$, $f_k(R'_j,R_{-j})=x_k$. Thus, 
$k$ does not depend on $j$.
Also recall that $i\in N_{c(i)}$, $f_j(R)=x_i\in O_{c(i)}$, and $j\not\in N_{c(i)}$.
Thus, $i$ \emph{externally justified envies} $j$ at $f(R)$, contradicts \emph{external fairness} of $f$. 
\end{proof}
 
Note that our two fairness notions are only used to prove Claim~\ref{claim1}.

By Claim~\ref{claim1}, $|S_\mathcal{C}|=t_2=K\geq 2$. Without loss of generality, let $\mathcal{C}=(i_1\to x_{i_1}\to\ldots\to i_K\to x_{i_K}\to i_1)$, and assume that $f_{i_1}(R)\neq x_{i_1}$. Let $c\equiv c(i_1)$.
That is, $i_1\in N_c$ and hence $x_{i_K}\in O_c$. Also, since $K\geq 2$, $x_{i_1}\not \in O_c$.
 
Let $R'_{i_1}:x_{i_1},x_{i_K},\ldots$
By \emph{strategy-proofness} of $f$, $f_{i_1}(R'_{i_1},R_{-i_1})\neq x_{i_1}$; otherwise agent $i_1$ has an incentive to misreport $R'_{i_1}$ at $f(R)$.
Note that if $i_1$ change his report from $R'_{i_1}$ to $R''_{i_1}$ by shifting $x_{i_K}$ to the top, then we have a cycle $(i_1\to x_{i_K} \to i_1)$ in $\cup^{t_1}_{\tau=1} \mathcal{C}_\tau(R''_{i_1},R_{-i})$. 
This is because that the last higher-priority agent in $c(i_1)$ is removed 
before step~$t_1-1$ under $TTC(R''_{i_1},R_{-i})$, so $(i_1\to x_{i_K} \to i_1)$ appears no later than step~$t_1$.
Then, by applying similar arguments we used to prove Claim~\ref{claim1}, we see that this cycle is executed at $f(R''_{i_1},R_{-i_1})$, i.e., $f_{i_1}(R''_{i_1},R_{-i_1})=x_{i_K}$.
Thus, by \emph{strategy-proofness} of $f$,
we conclude that $f_{i_1}(R'_{i_1},R_{-i_1})=x_{i_K}$; otherwise $i_1$ has an incentive to misreport $R''_{i_1}$ at $f(R'_{i_1},R_{-i_1})$.
Therefore, $f_{i_K}(R'_{i_1},R_{-i_1})\neq x_{i_K}$.

Let $R'_{i_K}: x_{i_K},x_{i_1},\ldots $ and $R'\equiv (R'_{i_1},R'_{i_K},R_{N\setminus\{i_1,i_K\}})$.
Since $f$ is \emph{strategy-proof}, $f_{i_K}(R'_{i_1},R_{-i_1})\neq x_{i_K}$ implies that $f_{i_K}(R')\neq x_{i_K}$. 

By dropping $x_{i_K}$ from top to the bottom, we obtain  $R''_{i_K}:x_{i_1},\ldots x_{i_K}$ such that for any two objects $o,o'\in O\setminus \{x_{i_K}\}$, $o \mathbin{R''_{i_K}} o' $ if and only if $o \mathbin{R'_{i_K}} o' $. Thus, by \emph{strategy-proofness} of $f$, $f_{i_K}(R')=f_{i_K}(R''_{i_K},R'_{-i_K})$.

By the definition of TTC, there is one cycle $(i_K\to x_{i_1} \to \ldots \to i_K)=\mathcal{C}'\in \cup_{\tau=1}^{t_1} \mathcal{C}_\tau(R''_{i_K},R'_{-i_K})$ such that $S_{\mathcal{C}'}=S_\mathcal{C}\setminus \{i_1\}$ and hence $|S_{\mathcal{C}'}|=t_2-1=K-1$,
See the figure below for the graphical explanation.

\begin{center}
    \begin{tikzpicture}[scale=0.2]
    \tikzstyle{every node}+=[inner sep=0pt]
    \draw [black] (11.4,-13.6) circle (3);
    \draw (11.4,-13.6) node {$i_1$};
    \draw [black] (28.8,-13.6) circle (3);
    \draw (28.8,-13.6) node {$x_{i_1}$};
    \draw [black] (11.4,-37.5) circle (3);
    \draw (11.4,-37.5) node {$i_K$};
    \draw [black] (29.7,-37.5) circle (3);
    \draw (29.7,-37.5) node {$x_{i_{K-1}}$};
    \draw [black] (46.7,-37.5) circle (3);
    \draw (46.7,-37.5) node {$...$};
    \draw [black] (11.4,-25.4) circle (3);
    \draw (11.4,-25.4) node {$x_{i_K}$};
    \draw [black] (46.7,-13.6) circle (3);
    \draw (46.7,-13.6) node {$i_2$};
    \draw [black] (31.8,-13.6) -- (43.7,-13.6);
    \fill [black] (43.7,-13.6) -- (42.9,-13.1) -- (42.9,-14.1);
    \draw [black] (43.7,-37.5) -- (32.7,-37.5);
    \fill [black] (32.7,-37.5) -- (33.5,-38) -- (33.5,-37);
    \draw [black] (26.7,-37.5) -- (14.4,-37.5);
    \fill [black] (14.4,-37.5) -- (15.2,-38) -- (15.2,-37);
    \draw [black] (11.4,-22.4) -- (11.4,-16.6);
    \fill [black] (11.4,-16.6) -- (10.9,-17.4) -- (11.9,-17.4);
    \draw [black] (14.4,-13.6) -- (25.8,-13.6);
    \fill [black] (25.8,-13.6) -- (25,-13.1) -- (25,-14.1);
    \draw (20.1,-14.1) node [below] {$R_{i_1}~/~R'_{i_1}$};
    \draw [black] (13.17,-35.07) -- (27.03,-16.03);
    \fill [black] (27.03,-16.03) -- (26.16,-16.38) -- (26.97,-16.97);
    \draw (20.69,-26.93) node [right] {$R''_{i_K}$};
    \draw [dashed] (11.4,-34.5) -- (11.4,-28.4);
    \fill [black] (11.4,-28.4) -- (10.9,-29.2) -- (11.9,-29.2);
    \draw (11.4,-31.45) node [left] {$R_{i_K}~/~R'_{i_K}$};
    \draw [black] (46.7,-16.6) -- (46.7,-34.5);
    \fill [black] (46.7,-34.5) -- (47.2,-33.7) -- (46.2,-33.7);
    \draw (46.2,-25.55) node [left] {$R_{i_2}$};
    \end{tikzpicture}
    \end{center}

By the selection of $R$, $\rho(R''_{i_K},R'_{-i_K})\geq (t_1,t_2)=\rho(R)$. Thus, $f(R''_{i_K},R'_{-i_K})$ executes $\mathcal{C}'$. Hence, $f_{i_K}(R''_{i_K},R'_{-i_K})= x_{i_1}=f_{i_K}(R')$. Since $f_{i_K}(R')= x_{i_1}$, $f_{i_1}(R')\neq  x_{i_1}$. Again, by applying similar arguments we used above, we conclude that $f_{i_1}(R')=x_{i_K}$. 
Given $f_{i_K}(R') =x_{i_1}$ and $f_{i_1}(R')=x_{i_K}$, we see that $f_{i_K}(R')  \mathbin{R'_{i_1}}  f_{i_1}(R')$ and $f_{i_1}(R') \mathbin{R'_{i_K}} f_{i_K}(R')$, contradicts \emph{pair efficiency} of $f$. 
This desired contradiction completes the proof of Theorem~\ref{thm}.

\subsection{Procedural fairness}
Next, we characterize TTC through \emph{procedural fairness}.

\begin{theorem}
\label{thm:pf}
A mechanism $f$ satisfies \textbf{strategy-proofness} and \textbf{procedural fairness} if and only if $f=TTC$. 
\end{theorem}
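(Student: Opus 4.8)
\textbf{Proof proposal for Theorem~\ref{thm:pf}.}

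The plan is to follow the same template as the proof of Theorem~\ref{thm}: the ``if'' direction is immediate since TTC is \emph{strategy-proof} \citep{pycia2017} and \emph{procedurally fair} (Proposition~\ref{proposition:sufficicent}), so only uniqueness needs work. Let $f$ be \emph{strategy-proof} and \emph{procedurally fair}; suppose toward a contradiction that $f \neq TTC$, and pick a problem $R$ minimizing the similarity $\rho(R)$ between $f$ and $TTC$ (same $\rho$ as before). Write $t = (t_1,t_2) = \rho(R)$, $x = TTC(R)$, and let $C = (i_1 \to x_{i_1} \to \cdots \to i_K \to x_{i_K} \to i_1) \in \mathcal{C}_{t_1}(R)$ be a non-executed cycle with $K = |S_C| = t_2$, say with $f_{i_1}(R) \neq x_{i_1}$. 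Let $O^{t_1}$ be the objects remaining at step~$t_1$ of $TTC(R)$, so each $x_{i_m}$ is $i_m$'s top choice in $O^{t_1}$.

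First I would re-run the argument of Claim~\ref{claim1}, but now the contradiction comes from \emph{procedural fairness} rather than \emph{internal}/\emph{external fairness}. The key observation is that, by minimality of $\rho(R)$, all cycles in $\bigcup_{\tau < t_1}\mathcal{C}_\tau(R)$ are executed at $f(R)$, and moreover — crucially — the trading-opportunity algorithm $\mathcal{T}$ and the TTC algorithm produce \emph{the same cycles} through step $t_1-1$ (both point each remaining object to the highest-priority remaining agent in its center, and each such agent to his top remaining object), so I can identify the trading opportunity sets $T_{i_m}(R)$ of the agents in $S_C$: each object $x_{i_{m-1}} \in O^{t_1}$ that $i_m$ points to in cycle $C$ is controlled, at step $t_1$ of $\mathcal{T}$, by the highest-priority remaining agent in $c(x_{i_{m-1}})$. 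I would argue that within cycle $C$ at step $t_1$ of $\mathcal{T}$, the very same cycle forms, so $x_{i_{m-1}} \in T_{i_m}(R)$ for each $m$ (indices mod $K$); in particular $x_{i_K} \in T_{i_1}(R)$ and $x_{i_1} \in T_{i_2}(R)$ (when $K=1$, $x_{i_1} \in T_{i_1}(R)$). Now since $f$ doesn't execute $C$ and all earlier cycles are executed, $f$ must redistribute the objects $\{x_{i_1},\dots,x_{i_K}\}$ among the agents $S_C$ (these objects can only go to agents in $S_C$ since everything outside $O^{t_1}$ is already committed, and these are the top remaining objects for $S_C$), and this redistribution is not the identity cycle $C$. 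I then want to extract either a single agent $i_m$ with some $o \in T_{i_m}(R)$ strictly preferred to $f_{i_m}(R)$, or a pair $\{i_m,i_{m'}\}$ with objects $o \in T_{i_m}(R)$, $o' \in T_{i_{m'}}(R)$ such that $o' P_{i_m} f_{i_m}(R)$ and $o P_{i_{m'}} f_{i_{m'}}(R)$ — contradicting \emph{procedural fairness}.

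To get that pair, I would use a \emph{strategy-proofness}-driven descent exactly as in the $K \geq 2$ part of the proof of Theorem~\ref{thm}: truncate $i_1$'s preference to $R'_{i_1}: x_{i_1}, x_{i_K}, \dots$, deduce via \emph{strategy-proofness} and the minimality of $\rho$ (the cycle $(i_1 \to x_{i_K} \to i_1)$ forms early under a suitable report and must be executed) that $f_{i_1}(R'_{i_1},R_{-i_1}) = x_{i_K}$, hence $f_{i_K}(R'_{i_1},R_{-i_1}) \neq x_{i_K}$; then symmetrically move $i_K$ to $R'_{i_K}: x_{i_K}, x_{i_1}, \dots$, get $f_{i_1}(R') = x_{i_K}$ and $f_{i_K}(R') = x_{i_1}$ at $R' = (R'_{i_1},R'_{i_K},R_{-\{i_1,i_K\}})$. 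I need to check that $x_{i_K} \in T_{i_1}(R')$ and $x_{i_1} \in T_{i_K}(R')$ under the perturbed profile: since $x_{i_K} \in O_c = O_{c(i_1)}$ and the perturbation only raises objects already reachable, the relevant prefix of the $\mathcal{T}$-algorithm is unchanged, so $i_1$ still controls/points to $x_{i_K}$ at the appropriate step and $i_K$ to $x_{i_1}$. Then $f_{i_1}(R') = x_{i_K} = x_{i_K}$ but $x_{i_1} \in T_{i_1}(R')$ with $x_{i_1} \mathbin{P'_{i_1}} x_{i_K} = f_{i_1}(R')$ — that is already a single-agent violation (1) of \emph{procedural fairness} at $R'$, so I may not even need the pairwise clause. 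I would phrase the final contradiction using whichever of (1) or (2) is cleanest; most likely: at $R'$, agent $i_1$ has $x_{i_1}$ in his trading opportunity set (since $i_1$ is top-priority in $c$ and $x_{i_1}$ is still available at step~$1$ of $\mathcal{T}(R')$ once we note $x_{i_1}$ could only have been removed earlier together with $i_1$), and $x_{i_1} \mathbin{P'_{i_1}} f_{i_1}(R')$.

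The main obstacle I anticipate is the bookkeeping that identifies the trading opportunity sets $T_i(\cdot)$ with the TTC cycle structure at the minimal profile and its perturbations: one must verify carefully that through the relevant steps the $\mathcal{T}$-algorithm and the TTC-algorithm remove exactly the same agent–object pairs (so that $x_{i_{m-1}} \in T_{i_m}$ genuinely holds), and that the preference perturbations $R'_{i_1}, R'_{i_K}$ — which only promote objects from $O_c$ — do not disturb that prefix. A secondary subtlety is ensuring that the redistributed objects at a non-executed cycle really must stay within $S_C$, which follows because all earlier TTC cycles are executed by $f$ and the $x_{i_m}$ are top-ranked in $O^{t_1}$ for the agents $S_C$, so any agent outside $S_C$ receiving some $x_{i_m}$ would create an earlier-or-equal unexecuted cycle, contradicting minimality — an argument already implicit in the proof of Theorem~\ref{thm} that I would spell out once.
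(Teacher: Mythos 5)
There is a genuine gap, and it sits exactly where the paper warns it would: the paper explicitly states that \citet{ekici2022}'s similarity-index approach (the one used for Theorem~\ref{thm}) does not work for Theorem~\ref{thm:pf}, and instead runs a \citet{svensson1999}-style induction on the size of first-step cycles. Your descent via \emph{strategy-proofness} and minimality of $\rho$ is fine as far as it goes, and your base case ($K=1$) is actually a clean simplification: there $x_i\in O_{c(i)}=T_i(R)$ and clause (1) of ``unfairly produced'' applies directly. The problem is the terminal contradiction for $K\geq 2$. After the descent you reach $f_{i_1}(R')=x_{i_K}$ and $f_{i_K}(R')=x_{i_1}$, which violates \emph{pair efficiency} — the property Theorem~\ref{thm} has available — but does not in general violate \emph{procedural fairness}. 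Your proposed single-agent violation asserts $x_{i_1}\in T_{i_1}(R')$, but $T_{i_1}(R')\subseteq O_{c(i_1)}$ and $x_{i_1}\in O_{c(i_2)}\neq O_{c(i_1)}$ whenever $K\geq 2$ (you correctly place $x_{i_1}$ in $T_{i_2}(R)$ earlier in your own write-up, so this final claim contradicts your own bookkeeping). The pairwise clause (2) fares no better for $K\geq 3$: a trade between $i_1$ and $i_K$ would require $x_{i_1}\in T_{i_K}(R')\subseteq O_{c(i_K)}$, yet agents in a common TTC cycle are leaders of distinct centers, so $c(i_2)\neq c(i_K)$ and $x_{i_1}\notin O_{c(i_K)}$. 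The only object $i_1$ prefers to $x_{i_K}$ under $R'_{i_1}$ is $x_{i_1}$, so no other blocking pair is available either; the argument dead-ends.

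The paper's actual proof circumvents this by first deriving \emph{queuewise rationality} from \emph{procedural fairness} (Lemma~\ref{lemma:pfqr}) and then inducting on the size of first-step cycles, moving the agents of a cycle $C=(1\to o_1\to\cdots\to L\to o_L\to 1)$ to profiles $R'_i:o_i,o_{i-1},\ldots$ where each agent's allotment is pinned down to $\{o_i,o_{i-1}\}$. If the cycle is ``reversed'' (everyone gets $o_{i-1}$), the blocking pair is the \emph{adjacent} pair $\{1,L\}$ trading $(o_L,o_{L-1})\in T_1(R)\times T_L(R)$ — and these objects genuinely lie in $O_{c(1)}$ and $O_{c(L)}$ respectively, which is precisely the membership your construction cannot secure. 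If you want to salvage your route, you would need to redesign the final perturbation so that the two mutually envious agents are adjacent in the cycle (so that the coveted objects sit in each other's centers), which is in effect what the paper's construction does; as written, your proof does not go through for $K\geq 3$.
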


\subsubsection{Proof of Theorem~\ref{thm:pf}}

By \citet{pycia2017} and Proposition~\ref{proposition:sufficicent}, 
we know that TTC satisfies the two properties.
Thus, we only need to prove the \textbf{uniqueness part}. 
Here, we apply \citet{Sethuraman2016}'s approach.
Let $f$ be \emph{strategy-proof} and \emph{procedurally fair}.

We first introduce some notation. 
For each problem $R\in \mathcal{R}^N$, recall that $\mathcal{C}(R)=\cup_\tau \mathcal{C}_\tau(R)$ is the set of all trading cycles that are obtained via $TTC(R)$. For each $\mathcal{C}\in \mathcal{C}(R)$, and each involved agent $i\in S_{\mathcal{C}}$, let $e_i(R)$ be the object that points to agent $i$ in the cycle $\mathcal{C}$, e.g., if $\mathcal{C}=(o\to i\to o'\to j\to o)$, then $e_i(R)=o$ and $e_j(R)=o'$.

Next, we inductively show that for each agent $i\in N$, $e_i(R)$ is in his trading opportunity set.

\begin{lemma}\label{Le_th2}
For each $R\in\mathcal{R}^N$, and each agent $i\in N$, $e_i(P) \in T_{i}(P)$.
\end{lemma}
\begin{proof}[Proof of Lemma~\ref{Le_th2}]
Let $R\in\mathcal{R}^N$, suppose that $TTC(R)$ terminates after Step~$K$.
%For $\tau\in \{1,\ldots,K\}$, let $\mathcal{C}_{\mathcal{T}}^\tau$ be the set of cycles formed by the algorithm $\mathcal{T}$ at step~$\tau$, according to $R$. 
Next we prove the lemma by the induction on $\tau=1,\ldots,K$.

\noindent\textbf{\textit{Induction basis.}} 
For each center $c\in C$, consider the agent in $c$ who has the highest priority according to $\succ_c$. 
By the definition of $\mathcal{T}$, we see that his trading opportunity set equals the set of objects in his center, $O_c$.
By the definition of TTC, for each $\mathcal{C}\in \mathcal{C}_1(R)$, and each $i\in S_{\mathcal{C}}$, we know that (1) $i$ has the highest priority according to $\succ_{c(i)}$; and (2) $e_i(R)\in O_{c(i)}$.
Thus, for each $i\in S_{\mathcal{C}}$, $e_i(R) \in T_{i}(R)=O_{c(i)}$.

\noindent\textbf{\textit{Induction hypothesis.}}
Let $L\in \{2,\ldots,K\}$, suppose that for each $\tau <L$, each $\mathcal{C}\in \mathcal{C}_\tau(R)$, and each $i\in S_{\mathcal{C}}$, we have $e_i(R) \in T_{i}(R)$.

Before we go to the induction step, observe that for $\tau <L$, if an agent $i$ and $e_i(R)$ is removed at step~$\tau$ under $TTC(R)$, then they are also removed at step~$\tau$ under $\mathcal{T}$, according to $R$.
This is because under both algorithms, at step~$\tau$, $i$ points to his most-preferred remaining object according to $R_i$ and $e_i(R)$ points to $i$ at step~$\tau$. Thus, $i$ and $e_i(R)$ are involved in a removed cycle at step~$\tau$ under both algorithms as well.

\noindent\textbf{\textit{Induction step.}}
Let $\mathcal{C}\in \mathcal{C}_L(R)$ and $i\in S_{\mathcal{C}}$.
Recall that at the beginning of step~$L$ under $TTC(R)$, the remaining  objects are $O^L$. Hence, $T_i(R)\subseteq O^L\cap O_{c(i)}$.
By the definition of TTC, we know that at step~$L$, (1) $e_i(R)$ belongs to $c(i$) and is remaining, i.e., $e_i(R)\in O^L\cap O_{c(i)}$; and
(2) agent $i$ has the highest priority among the remaining agents in center $c(i)$ according to $\succ_{c(i)}$. 
Thus, by the definition of $\mathcal{T}$ and the observation above, we see that all remaining objects in $O^L\cap O_{c(i)}$ point to agent $i$ at step~$L$ of $\mathcal{T}$, according to $R$. As a result, $e_i(R)$ points to $i$ under $\mathcal{T}$. Since $\mathcal{C}$ and $i$ are arbitrary chosen, this argument holds for each agent involved in $\mathcal{C}$.
Thus, we conclude that for each $\mathcal{C}\in \mathcal{C}_L(R)$, and each involved agent $i\in  S_{\mathcal{C}}$, $e_i(R)$ to points to $i$ under $\mathcal{T}$.
Again, by the observation above, given $R$, we see that all agents involved in $ \mathcal{C}$ also form a cycle at step~$L$ under $\mathcal{T}$. Therefore, for each $i\in S_{\mathcal{C}}$, $e_i(R) \in T_{i}(R)$. 
\end{proof}  

Together with Lemma~\ref{Le_th2}, \emph{procedural fairness} of $f$ implies that 
\begin{equation}
 \text{for each } R\in\mathcal{R}^N,\text{ and each }i\in N,
f_i(R) \mathbin{R_i} e_i(R). \label{equ:IR}
\end{equation}
Otherwise $i$ can be better off by consuming $e_i(R)\in T_i(R)$.
Given this fact, for each $R\in\mathcal{R}^N$ and each $i\in N$, we 
say an object is \textit{acceptable} at $R_i$ if $o\mathbin{R_i} e_i(R)$.

Now, we adapt \citet{Sethuraman2016}'s approach to complete the proof of Theorem~\ref{thm:pf}.
For each $R\in\mathcal{R}^N$, the \textit{size} of $R$, $s(R)$, is defined as the total number of acceptable objects across all agents at $R$, i.e., $s(R)=\sum_{i\in N}|\{o\in O\mid o\mathbin{R_i}e_i(R)\}|$.

Suppose that $f\neq TTC$. A profile $R\in\mathcal{R}^N$ is \textit{discordant} if $f(R)\neq TTC(R)$. By our assumption we know that there exists at least one discordant profile.
From now, we define $R$ as a profile with the minimal size, i.e., for each $\tilde{R}\in\mathcal{R}^N$, if $\tilde{R}$ is discordant, then $s(R)\leq s(\tilde{R})$.

Let $I_f(R)=\{i\in N\mid f_i(R)\mathbin{P_i}TTC_i(R)\}$ be the set of agents who prefer $f$ to $TTC$ at $R$. Similarly, $I_{TTC}(R)$ is defined analogously.

\begin{claim}
For each $i\in I_f(R)\cup I_{TTC}(R)$, agent $i$ has exactly two acceptable objects at $R_i$.\label{claim:size}
\end{claim}  
\begin{proof}[Proof of Claim~\ref{claim:size}]
By contradiction, it is without loss of generality to assume that there exists an agent $i\in I_f(R)$ who has more than two acceptable objects at $R_i$. Let $R'_i:f_i(R),e_i(R),\ldots$ and $R'\equiv (R'_i,R_{-i})$.

By \emph{strategy-proofness} of $f$, $f_i(R)=f_i(R')$; otherwise agent $i$ has an incentive to misreport $R_i$ at $f(R')$. Since $i\in I_f(R)$, $f_i(R)\mathbin{P_i}TTC_i(R)$. Thus, by \emph{strategy-proofness} of $TTC$, $TTC_i(R')\neq f_i(R)$; otherwise $i$ has an incentive to misreport $R'_i$ at $TTC(R)$. Overall, $f_i(R')=f_i(R)\neq TTC_i(R')$.
Thus, $R'$ is discordant such that $s(R') <s(R)$, contradicting the choice of $R$.
\end{proof}  

By \citet{pycia2017}, we know that TTC is \emph{Pareto efficient}, and hence $I_{TTC}(R)$ is non-empty, otherwise it means $f$ Pareto dominates TTC at $R$.

Let $i\in I_{TTC}(R)$, and let $\mathcal{C}\in \mathcal{C}(R)$ be the cycle involving agent $i$. In other words, $i\in S_{\mathcal{C}}$.

By Claim~\ref{claim:size}, we know that $R_i$ is such that $R_i:TTC_i(R),e_i(R),\ldots$

Since $i\in I_{TTC}(R)$, $\mathcal{C}$ is not executed at $f(R)$. 
Thus, $f_i(R)\neq TTC_i(R)$. By (\ref{equ:IR}), $f_i(R)\in \{e_i(R),TTC_i(R)\}$.
Therefore, $f_i(R)=e_i(R)$. This implies that the agent who points to $e_i(R)$ in $\mathcal{C}$, say agent $j$, does not receive $e_i(R)$ at $f(R)$. By applying similar arguments to agent $j$, we see that $f_j(R)=e_j(R)$. 
By repeating this argument for all agents in $S_{\mathcal{C}}$, we conclude that for each agent $k\in S_{\mathcal{C}}$, $f_k(R)=e_k(R)$. Consequently, $
S_{\mathcal{C}}\subseteq I_{TTC}(R)$.

Without of loss of generality, let $S_{\mathcal{C}}=\{i_1,\ldots,i_L\}$.
If $L=1$ then $TTC_{i_1}(R)=e_{i_1}(R)$. Since $f_{i_1}(R)\mathbin{R_{i_1}} e_{i_1}(R)$ we must have $f_{i_1}(R)=e_{i_1}(R)=TTC_{i_1}(R)$, contradicts $S_{\mathcal{C}}\subseteq I_{TTC}(R)$.
Thus, $L>1$.

By the definition of TTC, we know that all agents in $S_{\mathcal{C}}$, are form the different centers.
Thus, there is a set that consists of $L$ different centers $\{c({i_1}),\ldots,c({i_L})\}$. For each $i_\ell\in  S_{\mathcal{C}}$, $TTC_{i_\ell}(R)=e_{i_{\ell+1}}(R)\mathbin{P_{i_\ell}}e_{i_\ell}(R)=f_{i_\ell}(R)$ (mod $L$).
Also, Lemma~\ref{Le_th2} implies that $e_{i_\ell}(R)\in T_{i_\ell}(R)$.
Therefore, $f(R)$ is \emph{unfairly produced}, as agents $\{i_1,\ldots,i_L\}$ can be better off by exchanging from their trading opportunity sets, contradicts \emph{procedural fairness} of $f$.
This completes the proof of Theorem~\ref{thm:pf}.

\section{Relation to queuewise rationality}
\label{sec:qr}
\citet{ekici2024} characterizes TTC through \emph{strategy-proofness}, \emph{pair efficiency}, and \emph{queuewise rationality}. 
Here, we further discuss the relationship between his result and ours.

By Lemma~\ref{lemma:pfqr}, we know that \emph{procedural fairness} is mathematically stronger than \emph{queuewise rationality}. However, their economics interpretations are distinct: \emph{procedural fairness} is a fairness property based on agents' trading opportunities, \emph{queuewise rationality} is a voluntary participation condition. 

Next, we show that our other fairness properties, \emph{internal fairness} and \emph{external fairness} are logically unrelated to \emph{queuewise rationality}.
%\footnote{By the definitions, one can see that \emph{weak internal fairness} is implied by \emph{queuewise rationality}.}
To illustrate this, we provide three examples below, each satisfying two of the properties but violating the third.
\medskip

For simplicity, we focus on the case in which $C=\{c,c'\}$, $N_c=\{1\}$, $O_c=\{o_1\}$, $N_{c'}=\{2,3\}$, $O_{c'}=\{o_2,o_3\}$, and $\succ_{c'}:2,3$.

\begin{example}[Serial-QR mechanism]\ \\
\label{example:sqr}
We adapt so-called Multiple-Serial-IR mechanisms introduced by \citet{biro2021}.

\noindent \rule{1\columnwidth}{1pt}
\paragraph{Serial-QR Algorithm:} 
\begin{description}
\item[Step~$0$]: Let $\delta=(1,2,3)$ be an order of $N$. Let $R\in\mathcal{R}^N$. Let $Y(0)\subseteq X$ be the set of allocations that satisfy \emph{queuewise rationality} at $R$.
\item[Step~$k\geq 1$:] 
Let $Y_k$ be the set of agent $k$'s s allotments that are compatible with some allocation in $Y(k-1)$, i.e., $Y_k$ consists of all $y_k$ for which there exists an allocation $x\in Y(k-1)$ such that $x_k=y_k$. 
Let $y^*_k$ be agent $k$’s most preferred allotment in $Y_k$, i.e., for each $y_k\in Y_k$, $y^*_k\mathbin{R_k} y_k$.
Let $Y(k)\subseteq Y(k-1)$ be the set of allocations in $Y(k-1)$ that are compatible with $y^*_k$, i.e., $Y(k)$ consists of all $x\in Y(k-1)$ with $x_k=y^*_k$. 

\item[Termination:] The algorithm terminates after step~3.
\end{description}
\noindent \rule{1\columnwidth}{1pt}

Let $\Delta$ be such that for each $R\in \mathcal{R}^N$, $\Delta(R)$ is determined by the algorithm above.
By the definition of $\Delta$, it satisfies \emph{queuewise rationality} and \emph{internal fairness}.
To see it violates \emph{external fairness}, consider $R$ as follows.
$$R_1:o_2,o_1,o_3;$$
$$R_2:o_3,o_2,o_1;$$
$$R_3:o_2,o_1,o_3.$$

Note that by \emph{queuewise rationality}, for each $R'_{-2}$, $\Delta_2(R_2,R'_{-2})=o_3$ and hence agent $2$ does not depend on agent $1$. $\Delta(R)=(o_2,o_3,o_1)$ and hence agent $3$ envies agent $1$. Since agents $1$ and $3$ are from different centers, $o_2$ belongs to agent $3$'s center, and agent $2$ with higher priority than $3$ does not depend on agent $1$, we see that agent $3$'s envy is \emph{externally justified}.
     \hfill $\diamond$  
\end{example}

\begin{example}[Two steps: SD then TTC]\ \\
Given $R$, for each center $c$, first run serial dictatorship over $O_c$ with respect to $\succ_c$:
the highest-priority agent in $N_c$ picks up his most preferred object in $O_c$,
the second-highest priority agent picks his most preferred object among the remaining objects in $O_c$, and so on.
Denote the resulting allocation by $e(R)$.
Then, let $e(R)$ be an ``artificial'' endowment allocation and run the TTC algorithm with respect to $e(R)$, i.e., let each $e_i(R)$ points to $i$. 
This two-step mechanism satisfies \emph{queuewise rationality} and \emph{external fairness}, but violates \emph{internal fairness}. \label{example:TTC}
     \hfill $\diamond$  
\end{example}

\begin{example}[A variant of serial dictatorships]\ \\
\label{example:vsd}
Let $SD$ be the serial dictatorship algorithm with $\delta:1,2,3$. That is, agent $1$ picks up his most preferred object in $O$, agent $2$ picks up his 
his most preferred object among the remaining objects, and the last object is given to agent $3$.

Let $f$ be such that if agents $1$ and $2$ both have the same most preferred object, and that object is in $O_{c'}$, then $f(R)=TTC(R)$; otherwise $f(R)=SD(R)$.
$f$ satisfies \emph{internal fairness} and \emph{external fairness}, but violates \emph{queuewise rationality}.
\hfill $\diamond$    
\end{example}

\subsection{An alternative characterization}
\label{sec:another}
Here,  adapted from \citet{chen2025welfare}'s \emph{vacant object lower bounds},
we weaken \emph{queuewise rationality} to the following property, the \emph{center lower bound}, which imposes that each agent receives an object that is at least as desirable as the least desirable object owned by the center to which he belong.

For each $R\in \mathcal{R}^N$, each $c\in C$, and each $i\in N_C$, let $\underline{o}_i(R)\in O_c$ be the worst object in $O_c$ at $R_i$. That is, for each $o\in O_c$, $o\mathbin{R_i} \underline{o}_i(R)$.
\smallskip

\noindent The \textbf{center lower bound}: For each $R\in \mathcal{R}^N$ and each $i\in N$, $f_i(R)\mathbin{R_i} \underline{o}_i(R)$.\smallskip

It is easy to see that the \emph{center lower bound} %(1) is analogous to \emph{individual rationality}; (2) 
is weaker than \emph{queuewise rationality}.
Also, from examples above, we see that the \emph{center lower bound} is logically independent with \emph{internal fairness} and \emph{external fairness}.

Next, from Theorem~\ref{thm}, we characterize TTC by replacing \emph{external fairness} with the \emph{center lower bound}. 

\begin{theorem}
    A mechanism $f$ satisfies \textbf{strategy-proofness}, \textbf{pair efficiency},
    \textbf{internal fairness}, and the \textbf{center lower bound}, if and only if $f=TTC$.  \label{thm:ir}
\end{theorem}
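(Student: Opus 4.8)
\textbf{Plan for the proof of Theorem~\ref{thm:ir}.}
The ``if'' direction follows from Proposition~\ref{proposition:sufficicent}, since TTC satisfies \emph{internal fairness}, and TTC clearly satisfies the \emph{center lower bound} (it even satisfies \emph{queuewise rationality}, as shown in \citet{ekici2024}); \emph{strategy-proofness} and \emph{pair efficiency} of TTC are due to \citet{pycia2017}. So the content is in the uniqueness direction: let $f$ be \emph{strategy-proof}, \emph{pair efficient}, \emph{internally fair}, and satisfy the \emph{center lower bound}; we must show $f=TTC$. The plan is to mimic the architecture of the proof of Theorem~\ref{thm}, using \citet{ekici2022}'s similarity index $\rho$: assume $f\neq TTC$, pick a profile $R$ minimizing $\rho$, write $x\equiv TTC(R)$, let $t=(t_1,t_2)=\rho(R)$, and let $C\in\mathcal{C}_{t_1}(R)$ be the minimal non-executed cycle with $|S_C|=t_2$. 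As before, $f(R)$ executes all cycles in $\cup_{\tau<t_1}\mathcal{C}_\tau(R)$, and each $x_i$ with $i\in S_C$ is $i$'s top choice among the objects $O^{t_1}$ remaining at step $t_1$.

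The first step, corresponding to Claim~\ref{claim1}, is to show $t_2\geq 2$. Suppose $S_C=\{i\}$, so $C=(i\to x_i\to i)$ and $x_i\in O_{c(i)}=:O_c$. Some $j\neq i$ has $f_j(R)=x_i$, and $i$ envies $j$. If $j\in N_c$: since all earlier cycles are executed, every $k\in N_c$ with $k\succ_c i$ has $f_k(R)=x_k$, forcing $i\succ_c j$, so \emph{internal fairness} is violated — this case is identical to Theorem~\ref{thm}. The genuinely new case is $j\notin N_c$. Here \emph{external fairness} is no longer available, so I would instead invoke the \emph{center lower bound} together with \emph{strategy-proofness}. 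The idea: consider agent $i$ reporting $R'_i$ with $x_i$ at the top; by \emph{strategy-proofness} applied at $R$ we get $f_i(R'_i,R_{-i})\neq x_i$ still; but now I want to trap $i$ between $x_i$ and the worst object of $O_c$. One route: note $i$ can always secure, via \emph{center lower bound}, something $R_i$-at-least-as-good as $\underline{o}_i(R)$; combined with the fact that under the minimal profile $R$ the ``upstream'' structure of $C$ is forced, derive that the object $x_i$ — being $i$'s top remaining object and lying in $O_c$ — must be returnable to $i$ by an internal deviation, yielding a contradiction with \emph{internal fairness} after all. The crux is: when $j\notin N_c$ takes $x_i\in O_c$, who in $N_c$ ends up with a worse-than-$\underline o$ object or an internally-justified-envied assignment? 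I expect one can show the chain of reassignments inside $c$ (forced because $|N_c|=|O_c|$ and one object of $O_c$ has left to $j$) pushes some agent of $N_c$ below $\underline{o}$, violating the \emph{center lower bound}, or else creates an internal priority violation. This replacement of the \emph{external fairness} argument by a \emph{center lower bound} / counting argument is the main obstacle.

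Once $t_2\geq 2$ is established, the remainder parallels Theorem~\ref{thm} almost verbatim and uses only \emph{strategy-proofness} and \emph{pair efficiency}, neither \emph{external fairness} nor the \emph{center lower bound}. Writing $C=(i_1\to x_{i_1}\to\cdots\to i_K\to x_{i_K}\to i_1)$ with $f_{i_1}(R)\neq x_{i_1}$ and $c=c(i_1)$, one reports $R'_{i_1}:x_{i_1},x_{i_K},\ldots$ and argues via \emph{strategy-proofness} (using that shifting $x_{i_K}$ to the top creates a size-$1$ cycle $(i_1\to x_{i_K}\to i_1)$, which by the Claim-type argument just proved is executed) that $f_{i_1}(R'_{i_1},R_{-i_1})=x_{i_K}$; then $f_{i_K}(R'_{i_1},R_{-i_1})\neq x_{i_K}$. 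Next report $R'_{i_K}:x_{i_K},x_{i_1},\ldots$, set $R'\equiv(R'_{i_1},R'_{i_K},R_{N\setminus\{i_1,i_K\}})$, and using \emph{strategy-proofness} plus a drop-to-bottom move obtain a size-$(K-1)$ cycle $C'=(i_K\to x_{i_1}\to\cdots\to i_K)$ with $S_{C'}=S_C\setminus\{i_1\}$; minimality of $\rho(R)$ forces $f$ to execute $C'$, so $f_{i_K}(R')=x_{i_1}$, hence $f_{i_1}(R')=x_{i_K}$, and then $f_{i_K}(R')\mathbin{R'_{i_1}}f_{i_1}(R')$ and $f_{i_1}(R')\mathbin{R'_{i_K}}f_{i_K}(R')$ contradict \emph{pair efficiency}. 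This completes the argument. The only real work, and the only place the hypotheses differ from Theorem~\ref{thm}, is the $j\notin N_c$ subcase of the size-one claim, which I would handle by the counting/lower-bound argument sketched above.
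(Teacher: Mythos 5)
Your plan correctly isolates where the hypotheses of Theorem~\ref{thm:ir} actually enter: the only place the proof of Theorem~\ref{thm} uses \emph{external fairness} is Case~Two of Claim~\ref{claim1} (the subcase $j\notin N_c$), and the rest of the argument — the $K\ge 2$ manipulation chain ending in a \emph{pair efficiency} contradiction — uses only \emph{strategy-proofness}, \emph{pair efficiency}, and the Claim itself. (The paper omits the proof of Theorem~\ref{thm:ir} entirely, stating only that it is ``similar to Theorem~\ref{thm},'' so your reconstruction of the skeleton is consistent with what the authors intend.)

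However, the one genuinely new step is exactly the one you do not prove, and the idea you float for it does not work as stated. You suggest that when $j\notin N_c$ takes $x_i\in O_c$, a counting argument (``$|N_c|=|O_c|$ and one object of $O_c$ has left the center'') should push some agent of $N_c$ below his \emph{center lower bound} or create an internal priority violation. But the \emph{center lower bound} only guarantees each agent of $N_c$ an object at least as good as his \emph{worst} object of $O_{c}$; agents of $N_c$ may happily receive objects outside $O_c$ that they prefer to that worst object, so the departure of $x_i$ to an outsider forces no violation by counting alone, and \emph{internal fairness} is likewise not automatically violated (the remaining members of $N_c$ may all receive objects that $i$ ranks below $f_i(R)$). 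A workable replacement has to be constructive: for instance, let $i$ report $R'_i$ ranking $x_i$ first and the rest of $O_c$ above all of $O\setminus O_c$, so that the \emph{center lower bound} confines $f_i(R'_i,R_{-i})$ to $O_c$; then use \emph{strategy-proofness} to rule out $x_i$, minimality of $\rho$ to pin down the pre-$t_1$ cycles (and to dispose of the case where the self-cycle $(i\to x_i\to i)$ forms before step $t_1$ under the deviation), and \emph{internal fairness} to rule out $x_i$ going to a lower-priority member of $N_c$. Even then, the residual case in which an outside agent keeps $x_i$ while $i$ receives some other object of $O_c^{t_1}$ requires further work (it closes immediately only when $O_c\cap O^{t_1}=\{x_i\}$). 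Since this subcase also gets invoked again inside the $K\ge2$ argument (to execute the size-one cycle created by $R''_{i_1}$), the gap propagates through the whole uniqueness proof. As it stands, your submission is a correct road map with the single load-bearing lemma left unproven, and the heuristic offered for it is incorrect.
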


The proof is similar to Theorem~\ref{thm}, and hence we omit it.
Also note that as in Theorem~\ref{thm}, one can weaken \emph{internal fairness} to \emph{weak internal fairness} in Theorem~\ref{thm:ir}.

\section{Relation to stability}
\label{sec:stable}
\subsection{Core-stability}
In this subsection, by adopting the concept of strict core~/~core-stability, we further explore TTC from the perspective of cooperative game theory. 
In Shapley-Scarf housing markets, a set of agents $S\subseteq N$ can block an allocation $x$ by reassigning their own endowments among themselves to make all members of $S$ weakly better off and at least one member strictly better off. 
The definition of blocking can be decomposed into two conditions: one is the weak dominance condition, which says that for a set of agents $S$ to form blocking, all members in $S$ are weakly better off and at least one member is strictly better off; and another one is the self-autarky condition, which states that for $S$ to form a blocking, each member must bring his own endowment.

Still, the first condition can be directly applied to our context. However, the second condition needs to be modified because in our model, an agent may not own a single object, but rather a set of objects associated with the center to which he belongs. 
Thus, it is natural to require that what agents can bring to form a blocking based on the underlying priority structure.
Aligning with our idea of \emph{procedural fairness}, here we adjust the self-autarky condition by requiring that each member brings one object from his trading opportunity set.

Let $R\in \mathcal{R}^N$.
At $R$, a set of agents $S\subseteq N$ \textit{properly blocks} an allocation $x$ via $y$ if
\begin{itemize}
    \item[(1)] for each $i\in S$, $y_i\mathbin{R_i}x_i$ and for some $j\in S$, $y_j\mathbin{P_i}x_j$; 
    \item[(2)] there exists a one to one function $\omega:S\to O$ such that (2-a) for each $i\in S$, $\omega(i) \in T_i(R)$; and (2-b) $\{y_i\}_{i\in S}=\{\omega(i)\}_{i\in S}$.
\end{itemize}

Two points are worth noting about our self-autarky condition.
First, in the special case where each center contains only one agent, it reduces to the classical self-autarky condition used in the definition of the core.
Second, recall that an agent's trading opportunity set depends on the preferences of agents with higher priority. This restriction implicitly implies that whether an agent can (or cannot) bring an object to form a blocking depends on the views of higher-priority agents.\footnote{Aligning with this, \citet{ekici2024} modifies the self-autarky condition in a different way: he requires that for each $i\in S$, all agents who are from the same center and have higher priorities than $i$ must also be in $S$.}

An allocation is \textit{properly core-stable} at $R$ if it is not \emph{properly blocked} by any set of agents.
Such an allocation is also referred to as an \textit{proper core allocation}.
Note that any \emph{proper core allocation} is \emph{fairly produced}, but not vice versa.
The \textit{proper core}, $PC(R)$ is the set that contains all \emph{proper core allocations} at $R$.
\medskip

We next show that the \emph{proper core} is single-valued: Particularly, it is induced by TTC.

\begin{theorem}
    For each $R\in\mathcal{R}^N$, $PC(R)=\{TTC(R)\}$. \label{thm:PC}
\end{theorem}

\subsubsection{Proof of Theorem~\ref{thm:PC}}

First, we provide one useful result about agents' trading opportunity sets.
For each $R\in\mathcal{R}^N$, let $I_\tau(R)\subseteq N$ and $O_\tau(R)\subseteq O$ be the set of agents and the set of objects that are removed at the end of step~$\tau$ under $TTC(R)$, respectively.

\begin{claim}
For each $t$, and each $j \in I_t$, $T_j(R) \cap \bigcup_{\tau=1}^{t-1} O_\tau(R) = \emptyset$. \label{c:smaller}
\end{claim}

The proof immediately follows the definition of $\mathcal{T}$ and hence we omit it.
The rest of proof is similar to the proof in \citet{roth1977}, who demonstrate that the TTC always induces the unique core allocation in Shapley-Scarf housing markets.

We first show that TTC induces a proper core allocation.
Suppose that TTC does not induce a proper core allocation. Then, there are $R\in\mathcal{R}^N$ and $S\subseteq N$ such that $S$ properly blocks $TTC(R)$ via an allocation $y\neq TTC(R)$. Let $x\equiv TTC(R)$.
Let $\mathcal{C}\in\mathcal{C}_1(R)$ be a trading cycle formed at the first step under $TTC(R)$, and recall that $S_{\mathcal{C}}\subseteq N$ is the set of agents who are involved in $\mathcal{C}$. There are two cases.
First, if an agent $i\in S_{\mathcal{C}}$ is also in $S$, then we have $y_i=x_i$ as $x_i$ is his most preferred object.
Then, by our self-autarky condition (2), there is an agent $\widehat{i}\in S$ such that $x_i\in T_{\widehat{i}}(R)$. It implies that $\widehat{i}\in 
S_{\mathcal{C}}$; otherwise $i$ cannot receive $x_i$ via TTC.
Generally, we find that for each object involved in $\mathcal{C}$ there is a unique agent whose trading opportunity set contains that object, namely, the agent in $S_\mathcal{C}$ to whom the object points. Together with our self-autarky condition (2), it implies that agent is also in $S$.
Consequently, we conclude that $S_{\mathcal{C}}\subseteq S$, and all agents in $S_\mathcal{C}$ receive their TTC allocation at $y$ as well.
This implies that no agent in $S$ can be strictly better off by receiving an object from $\{x_i\}_{i\in S_{\mathcal{C}}}$.
Second, if no agent in $S_{\mathcal{C}}$ is in $S$, then by Claim~\ref{c:smaller}, we see that for each agent in $S$, his trading opportunity set does not contain any object from $\{x_i\}_{i\in S_{\mathcal{C}}}$. 
In any case, we conclude that no agent in $S$ can be strictly better off by receiving an object from an object that is removed from the first step of TTC.
Applying the argument to the second step and so on, we obtain that there is no agent $j\in S$ such that $y_j\mathbin{P_j}x_j$, leading to a contradiction. 

The converse can be proved by adopting Lemma~1 in \citet{roth1977}. Let $R\in\mathcal{R}^N$, $x\equiv TTC(R)$, and $\mathcal{C}_1,\ldots,\mathcal{C}_K$ be corresponding trading cycles. Without loss of generality, assume that for any $t,\tau=1,\ldots,K$, $t<\tau$ means that $\mathcal{C}_t$ is executed no later than $\mathcal{C}_\tau$. We also have corresponding coalitions $S_{\mathcal{C}_1},\ldots,S_{\mathcal{C}_K}$.
%By the definition of TTC, we also know that for each pair $(i,j)\in S_{\mathcal{C}_t}\times S_{C_\tau}$ with $t<\tau$, if $i$ and $j$ are from the same center, then $i$ has higher priority than $j$, i.e., $c\equiv c(i)=c(j)$ implies $i\succ_c j$. 
Now, consider an arbitrary allocation $y\in X$ with $y\neq x$. 
Let $K^*\in\{1\ldots,K\}$ be the first integer where there exists an agent $j\in S_{\mathcal{C}_{K^*}}$ such that $x_j\neq y_j$ Then, by construction, coalition $S=\cup_{\kappa=1}^{K^*} S_{\mathcal{C}_\kappa}$ properly blocks $y$ via $x$. To see it, we check our two conditions one by one.
By the selection of $K^*$, for each $i\in S\setminus S_{\mathcal{C}_{K^*}}$, $x_i=y_i$; by the definition of TTC, for each $i\in S_{\mathcal{C}_{K^*}}$ and each $o\in O\setminus\{x_i\}_{i \in S\setminus S_{\mathcal{C}_{K^*}} }$, $x_i\mathbin{R_i}o$. Thus, for each $i\in S_{C_{K^*}} $, 
$x_i\mathbin{R_i}y_i$, and hence $x_j\mathbin{P_j}y_j$, as $x_j\neq y_j\in O\setminus\{x_i\}_{i \in S\setminus S_{\mathcal{C}_{K^*}}}$. That is, condition (1) is satisfied. By the definition of TTC, we see $\{x_i\}_{i\in S}$ satisfies condition (2).

%Also, by the construction of $S$, we see that for each $i\in S$, any agent with higher priority than $i$ in the same center must be in $S$ as well. Thus, condition (3) is satisfied.  

\subsection{Pairwise stability}\label{sec:pairwisestable}
For priority-based allocation problems, e.g., school choice problems, a standard stability notion is \emph{pairwise stability} \citep{GaleShapleyAMM1962}: An allocation $x$ is \textit{pairwisely stable} at $R$ if there is no agent-object pair $(i,o)$ such that $o\mathbin{P_i}x_i$ and $i\succ_o j$, where $x_j=o$.
Clearly, this notion is not suitable here, as in our model, objects have no priorities over all agents. Even if we adopt each center $c$'s priority for objects in $O_c$, it would still be problematic because $\succ_c$ is incomplete.  
One way to address this issue is to expand the priorities to be complete. A natural expansion is as follows: For each $c\in C$,
each $o\in O_c$, each $i\in N_c$, and each $j\not\in N_c$, $i$ has higher expanded priority at $o$ than $j$.\footnote{\citet{klaus2019} propose an extension of TTC based on such expanded priorities, and \citet{combe2023reallocation} shows that this TTC extension is ``minimal envy'' among the set of \emph{individually rational}, \emph{strategy-proof}, and \emph{Pareto efficient} mechanisms.}
However, this assumption will lead to an incompatibility with \emph{pair efficiency}. To see it, let us revisit Example~\ref{example1}. Suppose that at $a$ and $b$, agent $1$ has higher priority than $3$, as $a,b\in O_{c},1\in N_c$, and $3\not \in N_c$. Similarly, 
$3$ has higher priority than $1$ and $2$ at $d$.
Suppose that $2$ receives $d$: in this case, if $3$ receives $b$, then $3$ would block this allocation as $3$ has higher priority than $2$ at $d$; however, if $3$ receives $a$ then it means $1$ receives $b$. So, $1$ would block this allocation as $1$ has higher priority than $3$ at $a$. Thus, $2$ does not receive $d$. Since $2$ also has priority than $1$ at $a$, we conclude that it must be the case that $2$ receives $a$. Subsequently, $1$ receives $b$ and $3$ receives $d$, i.e., the allocation $y$. However, this allocation is not \emph{pair efficient}: agents $2$ and $3$ can be better off by switching their allotments.

Therefore, finding alternative ways to expand priorities to be complete and defining criteria to select ``fair'' expansions constitutes an interesting open problem, though one that is
beyond the scope of this paper.

\section{Conclusion}
\label{sec:con}
In this paper, we investigate \citet{ekici2024}'s multi-center allocation problems. By proposing different fairness properties, we provide several TTC characterizations.
Our analysis sheds light on \citet{ekici2024} to show that TTC performs surprisingly well according to different objectives, e.g., efficiency, fairness, and incentives.
We finish this paper with some final remarks related to issues left aside during the presentation of the main results.

\subsection{Length constraints}
The primary application target of this paper is cross-program cooperation in organ exchange programs, e.g., kidney and liver exchanges.
One might wonder if TTC is not applicable for kidney exchanges, as it allows trading cycles without length constraints (of course, in our model, each cycle contains at most $|C|$ agents).\footnote{If $|C|$ is sufficiently small, we may not even meet length constraints. Note that in India, cyclic exchanges involving up to ten pairs have been successfully conducted \citep{kute2021non}.} 
In practice, cycle length constraints are imposed due to the scarcity of medical resources: a two-way exchange requires the simultaneous availability of four operating rooms and associated personnel, while a three-way exchange requires six, and so on.
However, in our model, the presence of multiple centers allows for 
integrating medical resources from different centers, making it feasible to conduct exchanges with longer cycles.\footnote{There are also studies that examine cooperation across different kidney exchange programs without considering length constraints; see, e.g., \citet{benedek2024computing}.}
% \textcolor{red}{Furthermore, it is possible to implement sequential kidney exchanges, where operations are distributed across different centers over time. For example, in period one, an operation could be performed at center $c_1$, followed by another operation at center $c_2$ in period two, and so on. As shown in \citet{ausubel2014sequential}, such sequential exchanges significantly increase the number of beneficial matches.}

\subsection{Balancedness}
Another restriction in our model is balancedness: For each center $c$, the inflow of agents coming from other centers to receive objects in $O_c$ is equal to the outflow of agents in $N_c$ who leave $c$ to receive objects in other centers.
This restriction aligns with real-world scenarios in kidney exchanges: As in the classical kidney exchange model, assuming $|N_c|=|O_C|$ is based on the observation that, in most cases, each patient comes with a living kidney donor, often a family member.\footnote{For organ exchange environments with imbalance and capacity constraints, \citet{ekici2024} proposes a solution by introducing fictitious agents and objects; the same construction can be applied here as well.} 
While this restriction seem to limit the applicability of this paper to other use cases, we argue this point by the following reason: Balancedness is an inviolable constraint in many practical allocation problems, such as shift reallocation and exchange programs \citep{dur2019two,yu2020market,manjunath2021,kamada2023}.
As pointed out by \citet{dur2019two}, in such settings, every position and every worker must be matched, and hence the balancedness restriction is imposed.
Thus, balancedness does not reduce the applicability of this paper.
Of course, considering extensions involving imbalance or varying object capacities is an interesting open problem.
%considering varying capacities of objects or designing capacities is an interesting open problem.
%\footnote{Capacity adjustment can significantly impact the performance of mechanisms,  see related work in the context of school choice problems, e.g., \citet{kumano2022quota} and \citet{afacan2024capacity}.} 
However, such issues are beyond the scope of this paper, and we leave them for future research.

%We conclude by suggesting some directions for future research.First, as in \citet{roth2007notes}, we could also consider centers as strategic players. That is, centers may have the option to decide whether to submit a pair to the global program or match it locally. Taking their incentives into account could offer new insights. Second, \emph{strategy-proofness} may be too demanding, as agents may not be able to consider all possible misreports. Both in theory and in practice \citep{roth1991incentives,castillo2016}, simpler misreporting strategies, such as dropping strategies, are compelling manipulation heuristics. Therefore, as in \citet{tamura2021} and \citet{coreno2024some}, one could consider weakening \emph{strategy-proofness} to \emph{dropping strategy-proofness} and explore new results.

\bibliographystyle{econ-econometrica}
\phantomsection\addcontentsline{toc}{section}{\refname}\bibliography{BibFile}

% The appendix command is issued once, prior to all appendices, if any.
\appendix

\section{Proofs}
\label{appendix:proof}

\subsection{Proof of Lemma~\ref{lemma:pfqr}}
Let $f$ be \emph{procedurally fair}.
Let $R\in\mathcal{R}^N$ and $x\equiv f(R)$. Let $i\in N$, $c\equiv c(i)$. Suppose that $i$ has the $k$-highest priority at $c$, i.e., $r_i=k$. We see that at most $k-1$ objects in $O_c$ are not in $T_i(R)$.
Consider the worst case that all the $k-1$ number of objects are more preferred than $o_i(R)$. In this case, $o_i(R)\in T_i(R)$. Therefore, $x_i\mathbin{R_i} o_i(R)$ holds, because at least agent $i$ can directly consume $o_i(R)$ from $T_i(R)$.

Next, we use an example to show that a \emph{queuewise rational} allocation may not be \emph{fairly produced}.

Let $C=\{c_1,c_2\}$, $N_{c_1}=\{1,2\},N_{c_2}=\{3\}$, $O_{c_1}=\{o_1,o_2\},O_{c_2}=\{o_3\}$, and $\succ_{c_1}:1,2$. 

Let $R$ be such that 
$$R_1:o_3,o_1,o_2;$$
$$R_2:o_3,o_2,o_1;$$
$$R_3:o_1,o_2,o_3.$$

Let $y\equiv (o_1,o_2,o_3)$. We see that $y$ is \emph{queuewise rational} at $R$. Note that $T_1(y)=\{o_1,o_2\}$ and $T_3(y)=\{o_3\}$. We find that agents $1$ and $3$ can be better off by trading $(o_1,o_3) \in T_1(y) \times T_3(y)$. Thus, $x$ is \emph{unfairly produced} at $R$.

\subsection{Proof of Proposition~\ref{proposition:sufficicent}}
Recall that TTC is \emph{pair efficient} and \emph{strategy-proof} as it is a sub-class of trading cycle mechanisms \citep{pycia2017}.

\emph{Internal fairness}.
Suppose that TTC violates \emph{internal fairness}. Then, there is a profile $R$, a center $c\in C$, and a pair of agents $i,j\in N_C$ such that $i\succ_c j$, and $i$ \emph{envies} $j$ at $TTC(R)$.
If $TTC_j(R) \in O_c$, then by the definition of TTC, we know that $i$ can receive $TTC_j(R)$ by pointing to $TTC_j(R)$ as $i\succ_c j$.
If $TTC_j(R)\not \in O_c$. Let $R'_i:TTC_j(R),\ldots$ We see that $TTC_i(R'_i,R_{-i})=TTC_j(R)$ by the definition of TTC, as $i$ moves earlier than $j$. This implies that TTC is not \emph{strategy-proof}, a contradiction.

\emph{External fairness}.
By the definition of TTC, we know that each agent in a cycle depends on each of the others in that cycle (or agents in the previous cycles).
Also we know that, if $i$ \emph{envies} $j$, then $j$ receives his allotment, denoted by $o$, at an earlier step than $i$.
Let $\mathcal{C}$ be the cycle that includes $j$. So, $i\in N_{c(i)}$ and $j\not\in N_{c(i)}$, and recall that $o\in O_{c(i)}$.
In $\mathcal{C}$, $o$ is pointing to an agent $k\in N_{c(i)}$ who is higher ranked than $i$ (i.e., $k\succ_{c(i)} i$) and who depends on $j$. Therefore, this envy is not \emph{externally justified}.

\emph{Procedural fairness}.
Finally, we inductively show that TTC is \emph{procedurally fair}.
Let $R\in\mathcal{R}^N$. Suppose that $TTC(R)$ stops at step~$K$.

\noindent\textbf{\textit{Induction basis.}}
Let $\mathcal{C}\in \mathcal{C}_1(R)$ and $i\in S_\mathcal{C}\subseteq I_1$. By the definition of TTC, we know that $x_i$ is $i$'s most preferred object.  
Thus, we conclude no agent in $I_1$ can be better off either by directly consuming an object from his trading opportunity set or by trading with other agents.

%either by directly consuming an object from $T_1(R)$, or by trading with other agents through a list of centers $c_1 \equiv c(i), \ldots, c_L \in C$, a list of agents $(i_1 \equiv i, \ldots, i_L) \in \prod_{\ell=1}^L N_{c_\ell}$, and a list of objects $(o_1, \ldots, o_L) \in \prod_{\ell=1}^L T_{i_\ell}(R)$ such that for each $i\in \{i_1,\ldots,i_L\}$, $o_{i+1}\mathbin{P_i}x_i$ (mod $L$) (which we simply refer to as ``trading with other agents'').

\noindent\textbf{\textit{Induction hypothesis.}} Let $\kappa\leq K$.
Suppose that for $\tau=2,\ldots,\kappa-1$,
no agent in $I_\tau$ can be better off either by directly consuming an object from his trading opportunity set or by trading with other agents.

\noindent\textbf{\textit{Induction step.}}
We show that no agent in $I_\kappa$ can be better off either by directly consuming an object from his trading opportunity set or by trading with other agents. Let $O^\kappa$ be the set of remaining objects at the beginning of step~$\kappa$ under $TTC(R)$. Let $i\in I_\kappa$.

Case (a). Direct consumption.
By the definition of TTC, $TTC_i(R)$ is $i$'s most preferred object in $O^\kappa$. By Claim~\ref{c:smaller}, $T_i(R)\subseteq {O}^\kappa $. 
Thus, $i$ cannot be better off by directly consuming an object from his trading opportunity set $T_i(R)$.

Case (b). Exchange.
By contradiction, suppose that $i$ can be better off by trading with other agents.
Thus, there is a list of centers $c_1, \ldots, c_L \in C$, a list of agents $(i_1, \ldots, i_L) \in \prod_{\ell=1}^L N_{c_\ell}$, and a list of objects $(o_1, \ldots, o_L) \in \prod_{\ell=1}^L T_{i_\ell}(R)$ such that for each $\ell \in \{i_1,\ldots,i_L\}$, $o_{\ell+1}\mathbin{P_\ell}TTC_\ell(R)$ (mod $L$). Without of loss of generality, let $i_1=i$. By Case (a) we know that $L>1$. Thus, $i_2\neq i$.
Now consider $i_2$. By the induction hypothesis, no agent in $\cup_{\tau=1}^{\kappa-1}I_\tau$ can be better off by trading with $i$. 
Thus, $i_2$ is remaining at step~$\kappa$, i.e., $i_2 \not \in \cup_{\tau=1}^{\kappa-1}I_\tau$. 
Then, by Claim~\ref{c:smaller}, we have $T_{i_2}(R)\subseteq {O}^\kappa $.
Also, recall that for each $o\in O^L$, $TTC_i(R)\mathbin{R_i} o$. 
Therefore, $i$ cannot be better off by trading with $i_2$.

\section{Independence examples}
\label{appendix:examples}
We establish the independence of properties through the following examples.
We label examples by the property that is not satisfied.

\subsection*{Theorem~\ref{thm}}
\begin{example}[Staregy-proofness]\ \\
Let $C=\{c_1,c_2,c_3\}$, $N_{c_1}=\{1\}$, $O_{c_1}=\{o_1\}$, $N_{c_2}=\{2\}$, $O_{c_2}=\{o_2\}$, $N_{c_3}=\{3,4\}$, $O_{c_3}=\{o_3,o_4\}$, and $\succ_{c_3}:3,4$.

Consider a ``circular preference profile'' as the one below.
$$\widetilde{R}_1:o_2,o_3,o_1,o_4;$$
$$\widetilde{R}_2:o_3,o_1,o_2,o_4;$$
$$\widetilde{R}_3:o_1,o_2,o_3,o_4.$$
$$\widetilde{R}_4:\ldots$$

Note that $x\equiv (o_3,o_1,o_2,o_4)$ is \emph{pair efficient} at $\widetilde{R}$. However, $TTC(\widetilde{R})$ outputs the allocation $(o_2,o_3,o_1,o_4)$, which makes agents $\{1,2,3\}$ strictly better off.

Let $f$ be such that $f(\widetilde{R})= x$ and, for each $R\neq \widetilde{R}$, let $f(R)=TTC(R)$. $f$ is \emph{pair efficient},   \emph{internally fair}, \emph{externally fair}, but is not \emph{strategy-proof}.  \label{example:sp}
 \hfill $\diamond$    
\end{example}

\begin{example}[Pair efficiency]\ \\
\begin{comment}
Let $\hat{\succ}$ be a strict order over $N$ such that for each $c\in C$ and for all $i,j\in N_c$, $i \hat{\succ} j$ if and only if $i \succ_c j$. 

For each $c\in C$ and each $o\in O_C$, let $\hat{\succ}_o$ be such that
\begin{itemize}
    \item for all $i\in N_c$ and all $j\not\in N_c$, $i \hat{\succ}_o j$;
    \item for all $i,j\in N_c$, $i \hat{\succ}_o j$ if and only if $i\succ_c j$; and
    \item for all $i,j\not\in N_c$,  $i \hat{\succ}_o j$ if and only if $i\hat{\succ} j$.
\end{itemize}

Let $\hat{\succ}_O=(\hat{\succ}_o)_{o\in O}$. 
Let $DA^{\hat{\succ}_O}$ be such that for each $R\in \mathcal{R}^N$, $DA^{\hat{\succ}_O}(R)$ is determined by the agent-proposing deferred acceptance algorithm with $(R,\hat{\succ}_O)$.
Since $DA^{\hat{\succ}_O}$ is obtained by the deferred acceptance algorithm, we know it is \emph{strategy-proof} but is not \emph{pair efficient}. By the construction of $\hat{\succ}_O$, we see $DA^{\hat{\succ}_O}$ is \emph{internally fair} and \emph{externally fair} as it is pairwise stable with respect to $\hat{\succ}_O$.    
\end{comment}
Let $C=\{c_1,c_2\}$, $N_{c_1}=\{1\}$, $O_{c_1}=\{o_1\}$, $N_{c_2}=\{2,3\}$, $O_{c_2}=\{o_2,o_3\}$, and $\succ_{c_2}:,2,3$.

Let $f$ be such that for each $R\in\mathcal{R}^N$,
$f_1(R)=o_1$ and $f_2(R)\mathbin{R_2}f_3(R)$.  
Essentially, it means that agent $1$ always receives his own, and we run a serial dictatorship algorithm with $2,3$.
We see that $f$ violates \emph{pair efficiency} and satisfies other three properties.
 \hfill $\diamond$    \label{example:DA}
\end{example}

\begin{example}[Internal fairness]\ \\
Let $|C|=1$. Thus, \emph{external fairness} is trivially satisfied.
Run the serial dictatorship algorithm that does not respect the center's priority $\succ_c$. It is easy to see that it is \emph{strategy-proof} and \emph{pair efficient} but is not \emph{internally fair}.
 \hfill $\diamond$ \label{example:RSD}
\end{example}

\begin{example}[External fairness]\ \\
Let $C=\{c_1,c_2\}$, $N_{c_1}=\{1\}$, $O_{c_1}=\{o_1\}$, $N_{c_2}=\{2,3\}$, $O_{c_2}=\{o_2,o_3\}$, and $\succ_{c_2}:,2,3$.

Let SD be the serial dictatorship with $\pi:2,1,3$.
It is easy to see that SD is \emph{strategy-proof}, \emph{pair efficient}, and \emph{internally fair} but is not \emph{externally fair}. 
\hfill $\diamond$    
\end{example}

\subsection*{Theorem~\ref{thm:pf} and Procedural fairness}
We first show that \emph{strategy-proofness} and \emph{procedural fairness} are independent.

\noindent \textbf{Staregy-proofness}: 
Consider a modification of Example~\ref{example:sp} as follows.

Let $C=\{c_1,c_2,c_3\}$, $N_{c_1}=\{1\}$, $O_{c_1}=\{o_1\}$, $N_{c_2}=\{2\}$, $O_{c_2}=\{o_2\}$, $N_{c_3}=\{3,4\}$, $O_{c_3}=\{o_3,o_4\}$, and $\succ_{c_3}:3,4$.

Let $\widetilde{R}$ be such that

$$\widetilde{R}_1:o_2,o_3,o_1,o_4;$$
$$\widetilde{R}_2:o_1,o_3,o_2,o_4;$$
$$\widetilde{R}_3:o_1,o_2,o_3,o_4.$$
$$\widetilde{R}_4:\ldots$$

Let $f$ be such that $f(\widetilde{R})= (o_2,o_3,o_1,o_4)$ and, for each $R\neq \widetilde{R}$, $f(R)=TTC(R)$. $f$ is \emph{procedurally fair} but is not \emph{strategy-proof}.
 \hfill $\diamond$  
\medskip

\noindent \textbf{Procedural fairness}: A constant mechanism that always selects the same allocation is \emph{strategy-proof} but is not \emph{procedurally fair}. \hfill $\diamond$  
\medskip

Next, we show that \emph{procedural fairness} is independent with other mentioned properties. By Theorem~\ref{thm:pf} and independence examples in Theorem~\ref{thm}, one can find that \emph{pair efficiency}~/~\emph{internal fairness}~/~
\emph{external fairness} cannot imply \emph{procedural fairness} solely. Thus, we only show that \emph{procedural fairness} does not imply any of the other properties.
\medskip

%\noindent \textbf{Staregy-proofness}: Example~\ref{example:sp} is \emph{procedurally fair} but is not \emph{strategy-proof}.\medskip

\noindent \textbf{Pair efficiency}: 
Let $C=\{c_1,c_2,c_3\}$, $N_{c_1}=\{1,2\},N_{c_2}=\{3\},N_{c_3}=\{4\}$, $O_{c_1}=\{o_1,o_2\},O_{c_2}=\{o_3\},O_{c_3}=\{o_4\}$, and $\succ_{c_1}:1,2$. 

Let $R$ be such that 
$$R_1:o_3,o_4,\ldots;$$
$$R_2:o_4,o_3,\ldots;$$
$$R_3:o_1,o_2,\ldots;$$
$$R_4:o_2,o_1,\ldots.$$

%Let $f$ be such that $f(R)=x=(o_3,o_4,o_2,o_1)$ and for each $R'\neq R$, $f(R')=TTC(R')$.

Consider $x\equiv (o_3,o_4,o_2,o_1)$.
Note that $T_1(R)=\{o_1,o_2\},T_3(R)=\{o_3\},T_4(R)=\{o_4\}$, and $T_2(R)=\{o_2\}$.
One can see that agents $1$ and $2$ cannot be better off because they receives their most preferred objects. Also, agents $3$ and $4$ prefer any object in $\{o_1,o_2\}$ over any object in $\{o_3,o_4\}$. Thus, $x$ is \emph{fairly produced}. However, $x$ is not \emph{pair efficient} as agents $3$ and $4$ can be better off by switching their allotments. 
 \hfill $\diamond$  
\medskip

\noindent \textbf{Internal fairness}: 
Let $C=\{c_1,c_2\}$, $N_{c_1}=\{1,2\},N_{c_2}=\{3\}$, $O_{c_1}=\{o_1,o_2\},O_{c_2}=\{o_3\}$, and $\succ_{c_1}:1,2$. 

Let $R$ be such that 
$$R_1:o_3,o_2,o_1;$$
$$R_2:o_3,o_2,o_1;$$
$$R_3:o_1,o_2,o_3.$$

Consider $x\equiv (o_2,o_3,o_1)$. $x$ is not \emph{internally fair} as $1$ envies $2$ and $1\succ_{c_1} 2$.
However, $x$ is \emph{fairly produced}. To see it, first note
that $T_1(x)=\{o_1,o_2\},T_3=\{o_3\}$, and $T_2(x)$ is either $\{o_1\}$ or $\{o_2\}$. 
So, while agent $1$ prefers $o_2$ to $o_1$, $1$ cannot be better off solely via consuming another object from $T_1(x)$.
Also note that only agent $1$ does not receive his most preferred object. It means that agent $1$ cannot be better off by trading with other agents. 
 \hfill $\diamond$   
\medskip

\noindent \textbf{External fairness}: 
Consider a modification of Serial-QR mechanism in Example~\ref{example:sqr}, where $Y(0)$ is adjusted to include all fairly produced allocations. This mechanism is \emph{procedurally fair} but is not \emph{externally fair}. \hfill $\diamond$

\subsection*{Theorem~\ref{thm:ir}}
Examples~\ref{example:vsd},~\ref{example:sp},~\ref{example:DA}, and~\ref{example:RSD} show the independence of Theorem~\ref{thm:ir}.

\end{document}